\numberwithin{figure}{section}
\numberwithin{equation}{section}
\global\long\def\E{\text{\textbf{E}\,}}
\global\long\def\P{\text{\textbf{P}}}
\global\long\def\CandSet{\mathcal{L}}
\renewcommand{\rho}{\varrho}
\newcommand{\R}{\mathbb{R}}
\newcommand{\N}{\mathbb{N}}
\newcommand{\NN}{\mathcal{N}}
\renewcommand{\L}{\mathcal{L}}
\newcommand{\X}{\bm{X}}
\newcommand{\V}{\bm{V}}
\newcommand{\MU}{\bm{\mu}}
\newcommand{\NU}{\bm{\nu}}
\newcommand{\SIGMA}{\bm{\Sigma}}
\newcommand{\LAMBDA}{\bm{\Lambda}}
\newcommand{\x}{\bm{x}}
\newcommand{\xx}{\overline{\bm{x}}}
\newcommand{\y}{\bm{y}}
\newcommand{\n}{\bm{n}}
\renewcommand{\t}{\bm{t}}
\newcommand{\q}{\bm{q}}
\newcommand{\Pkt}{\,\mbox{\tiny$\circ$}\,}
\newcommand{\cov}{\text{cov}}
\newcommand{\iid}{\text{i.i.d. }}
\newcommand{\DPW}{\textsc{Dpw}\xspace}
\newcommand{\GreedyOCBA}{\textsc{GreedyOCBA}\xspace}
\newcommand{\EqAlloc}{\textsc{EqAlloc}\xspace}
\newcommand{\PLUCK}{\textsc{Pluck}\xspace}
\newcommand{\BayesRS}{\textsc{BayesRS}\xspace}
\newcommand{\KN}{$ {\cal{KN\negthickspace+\negthickspace+}} $ }
\theoremstyle{definition}
\newtheorem{Theorem}{Theorem}
\newtheorem{Example}{Example}
\begin{document}

\title{Ranking and Selection: A New Sequential Bayesian
    Procedure for Use with Common Random Numbers} 
\author{Bj{\"o}rn G{\"o}rder\thanks{Institute of Applied Stochastics an OR, TU
    Clausthal, Germany, bgoerder@gmail.com}
\and
Michael Kolonko \thanks{Institute of Applied Stochastics an OR, TU
    Clausthal, Germany, kolonko@math.tu-clausthal.de}}
\maketitle

\begin{abstract}
  We introduce a new concept for selecting the best alternative out of a given
  set of systems which are evaluated with respect to their expected
  performances. We assume that the systems are simulated on a computer and that
  a joint observation of all systems has a multivariate normal distribution with
  \emph{unknown mean} and \emph{unknown covariance} matrix. In particular, the
  observations of the systems may be stochastically dependent as it is the case
  if common random numbers are used for the simulation. The main application we
  have in mind is heuristic stochastic optimization where ‘systems’ are
  different solutions to an optimization problem with random inputs.

  We use a Bayesian setup with an uninformative prior and iteratively allocate a
  fixed number of simulations based on the posterior distribution of the
  observations until the ranking and selection decision is correct with a given
  high probability.  We introduce a new simple allocation strategy that is
  directly connected to the error probabilities calculated before. The necessary
  posterior distributions can only be approximated, but we give extensive
  empirical evidence that the error made is well below the given bounds.

  Our extensive test results show that our procedure \BayesRS uses less simulations than
  comparable procedures from the literature in different correlation scenarios.  At
  the same time \BayesRS needs no additional prior parameters and can cope with
  different types of ranking and selection tasks.

 \textbf{keywords:}  Sequential Ranking and Selection,
\end{abstract}






\section{Introduction}

We consider ranking and selection of systems based on the average performance of
the alternatives. The particular set-up used here is motivated by problems from
optimization under uncertainty.

Often in operations research as well as in technical applications the
performance of solutions depends on some random influence like market
conditions, material quality or simply measurement errors. We shall call such
random influences a random \emph{scenario}. Usually, the aim is then to find a
solution with minimal expected costs taken over all scenarios.  Let $
X_{i}=c(i,Z)$ be the random costs when solution (or 'alternative') $ i$ is
applied to the random scenario $ Z$. We want to measure the quality of $ i$ by
its expected costs $\mu_i:=\E c(i,Z)$ taken over all possible scenarios.

Except for particularly simple cases, we will not be able to calculate this
expression analytically.  Instead, we have to estimate $ \E c(i,Z)$ based on a
sample $ c(i,z_1) ,\ldots, c(i,z_n),$ where $ z_1 ,\ldots, z_n$ are random
scenarios. We assume here that simulations are done on a computer, therefore we
can identify $ z_1 ,\ldots, z_n$ with the seeds used for the random generator.

Optimization with respect to a simulated cost function is often done by
heuristic search methods like genetic algorithms, ant algorithms or
cross-entropy optimization (see e.g.\ \cite{reeves2010genetic},
\cite{dorigo2010ant}, \cite{wu2014asymptotic}). Typically, these methods take a
relatively small set $ {\cal L}:=\{1 ,\ldots, L\}$ of solutions (a
`population') and try to improve the quality of $ \cal L$ iteratively.  The
improvement step usually includes a \emph{selection} of the best solutions from
$ \cal L$ with respect to their expected costs $ \mu_i:=\E c(i,Z)$. Methods of
ranking and selection are therefore widely used in heuristic optimization under
uncertainty, see e.g. \cite{Schmidt2006} for an overview. As the $ \mu_i$ can
only be estimated, selection will return sub-optimal solutions with a certain
error probability.

The aim of this paper is to develop a strategy that allocates simulation runs to
alternatives in such a way that the error probability for the selection of good
alternatives is below a given bound. 

 It is well known, that if  observations of different solutions are
\emph{positively correlated}, it is more efficient to use common random numbers
(CRN), i.e.\ to compare the solutions on the same scenario (see
e.g. \cite{glasserman1992some}). Positive correlation in our case roughly means,
that if a solution $ i\in {\cal L} $ has, for a scenario $z,$ costs $ c(i,z)$
that are above average, then costs $ c(j,z)$ will tend to be over average for
all the other solutions $ j\in \cal L$ also. In other words, if some scenario $
z$ is relatively difficult (costly) for some solution $ i,$ then $ z$ will tend to be
difficult for all solutions.  Similarly, a scenario that has small costs (below
average) for one solution will tend to be an easy scenario for all solutions
creating smaller costs for all of them. This is the behavior which we would
expect for many optimization problems. Therefore, we are interested in ranking
and selection strategies that can deal with dependent observation from
simulation with CRN.

There are many different approaches to ranking and selection in the literature,
we mention only the most prominent ones here.  Many papers concentrate on
sampling from \emph{independent} alternatives, see e.g.\ \citep{KimNelsonUeber}
for an overview and \citep{kim2006asymptotic} for an advanced sequential method,
\KN. Other authors allow \emph{correlated samples} but assume that the
correlation structure, i.e.\ the covariance matrix of the joint distribution is
known as e.g.\ in \citep{Fu2007a}, or, in a Bayesian set-up that (part of) the
prior distribution is known (\citep{frazier2011value},\citep{qu2012ranking}).

As this may not be the case in realistic applications, many procedures use a
\emph{two stage} approach in which a sample of fixed size $ n_0$ is taken from
all observations in the first stage. The unknown parameters, as e.g.\ the
covariance matrix, are then estimated from the first stage sample. Based on these
estimates, the samples in the next stage are allocated to the alternatives. In a
pure two-stage procedure, sampling stops after the second stage and ranking and
selection is performed as e.g.\ in \cite{nelson1995using}, \cite{Chick2001},
\cite{Fu2007a} or \cite{peng2012}. A \emph{sequential} procedure updates the
estimates after each stage and starts a new iteration until the final decision
reaches a certain level of quality, see e.g.\ \cite{kim2006asymptotic},
\cite{Chick2001a} or \cite{qu2012ranking}.

  The quality of the final decision may be measured by different functions: the
\emph{opportunity cost} (or linear loss) is the difference between the mean of
the selected alternative and the true best value (e.g.\ \cite{Chick2001a},
\cite{qu2012ranking}) whereas the \emph{$ 0-1$-reward} function is $ 1$ if the
true value is selected and $ 0$ if not. Here, a selection may be seen as correct
if it is within a $ \delta$-distance of the true correct alternative
(indifference zone). The expected value of the $ 0-1$-reward function is the
probability of a correct selection (PCS), this is used e.g.\ in
\cite{Chick2001a}, \cite{Fu2007a} or \cite{peng2012}.

Similarly, the sample allocation on a single stage may be determined using the
\emph{value-of-information} (or \emph{knowledge gradient}) approach, choosing an
alternative that promises the largest expected increase in the best mean value
after the next observation, this is used e.g.\ in \cite{frazier2011value} or
\cite{qu2012ranking}. On the other hand, the goal may be to maximize the
\emph{probability of a correct selection} (PCS) given a fixed budget of
simulations (\cite{Chen1997},\cite{Chen2010}, \cite{peng2012} or
\cite{Fu2007a}). In \citep{luo2015fully} and \citep{ni2014comparison} execution
of ranking and selection in a parallel environment is discussed and
\citep{NelsonBoot2014} extends the scope from Normal distributions to general
distributions using a bootstrap approach.

In the present paper, we introduce a new Bayesian procedure for ranking and
selection called \BayesRS that combines some of the features mentioned
above. Observations are assumed to be from a multivariate Normal distribution
with unknown mean and covariance matrix, where we use a so-called uninformative
prior distribution.  In a first stage, $ n_0$ complete observations from all
alternatives are made. Then we continue sampling in a sequential fashion until
we are sure that the PCS $ \ge 1-\alpha$ for a given $ \alpha$. For each iteration we
are given a fixed computing budget of $ b$ simulations that has to be allocated to the
alternatives. The only input parameters therefore are $ n_0, \alpha$ and $b$.  

Within our procedure \BayesRS we compare two different allocation
strategies. The first, \GreedyOCBA, is adapted from the optimal computing budget
allocation strategy of \citep{Chen1997}, see also \citep{Branke2007}. It
allocates the budget according to the increase in the PCS we can expect, if the
whole budget would be given to a single alternative. Our new strategy \DPW uses
the \emph{dominance probability} of a pair $ (i,j)$ of alternatives, i.e.\ the
posterior probability that the mean of alternative $ i$ is less than the mean of
$ j$. \DPW allocates simulations such that the dominance probabilities of all
pairs relevant for the present ranking and selection task (see below) are
increased.  As these dominance probabilities were also used by \BayesRS for the Bonferroni lower bound
of the PCS in the last stage, \DPW may simply re-use these values and allocate
the simulations for the present stage. 

To determine the dominance probabilities, we need the posterior distribution of
the unknown mean. Due to a possibly unequal allocation of the simulation budget
we may have incomplete observation for the different alternatives. Our
particular sampling scheme (see Section \ref{subsec:IterativeAllocation})
guarantees a so-called monotone pattern of missing data, i.e.\ observations are
missing only at the end of the sample. For this case the posterior distribution
of the means is known in case the covariance matrix $ \SIGMA$ is given. No such
result seems to exist for unknown $ \SIGMA$. We therefore use an approximation
that is adapted from the well-studied case of complete observations. Our
empirical experiments underline the practicability of this approximation, the
errors seem to be well below the bounds we imposed.

Many ranking and selection procedures only work for the selection of the best
alternative with minimal (or maximal) mean. Our procedure also works for more
general targets as long as they can be defined by pairwise comparison of
alternatives, as e.g.\ determining the $ m$ best alternatives and rank them or
determine the alternative with median mean. 

In our empirical tests we compare the two allocation strategies \GreedyOCBA and
\DPW in different scenarios. We also compared \BayesRS with allocation \DPW to
two procedures well-known from the literature, namely \KN from
\cite{kim2006asymptotic} and \PLUCK from \cite{qu2012ranking}(see also
\citep{qu2015sequential}). In particular \PLUCK seems to be similar to our
set-up as it is a fully sequential procedure that allows dependent observations
and requires only mean and scale matrix of the prior distribution to be known.
In our experiments however, our procedure \BayesRS seemed to be far superior to
both, \KN and \PLUCK.

The main contributions of the present paper are: it introduces a new
R\&S-procedure that allows for dependent sampling without any prior knowledge of
parameters, it uses a general target scheme for the alternatives to be selected
and it introduces a new simple, and empirically very efficient allocation rule
\DPW that is based on a new approximation of the posterior distributions.

This paper is based on the doctoral thesis \cite{Gorder2012}.  It is organized
as follows. In Section \ref{sec:MathModel} we give the exact mathematical
description of the sampling process and our Bayesian model. Technical details
about the posterior distribution with missing data are sketched in an
Appendix. Section \ref{sec:RankSelect} generalizes the concept of ranking and
selection of solutions and gives a simple Bonferroni bound for the PCS. A precise
definition of our complete ranking and selection algorithm is given in Section
\ref{sec:TheAlgorithm}. In Section \ref{sec:Alloc} we introduce the new
allocation rule \DPW and the adapted OCBA procedure \GreedyOCBA.  We report on
extensive empirical tests of our algorithm and its comparison to \KN and \PLUCK
in Section \ref{sec:Computational-study}. Some conclusions are given in the
final Section \ref{sec:Conclusion}.

\section{The Mathematical Model}\label{sec:MathModel}
\subsection{A Bayesian Environment}
Let $ \L:=\{1 ,\ldots, L\}$ denote the fixed set of alternatives.  $ X_{ik}$ is the
$ k$-th observation of alternative $ i\in \L$. 
Simulating \emph{all} of the alternatives from $ \L$ with the $ k$-th scenario
therefore leads to a (column) vector of observations
\begin{equation}\label{eq:DefXPj}
\X_{\Pkt k}:=(X_{1k} ,\ldots, X_{Lk})^T,\quad k=1,2\ldots.
\end{equation}
We assume that for known $ \MU, \SIGMA$, the sequence $ \X_{\Pkt 1},\X_{\Pkt 2}
,\ldots$ of observations are independent and identically $
\mathcal{N}_L(\MU,\SIGMA)$-distributed. Here, $\mathcal{N}_L(\MU,\SIGMA)$
denotes the $ L$-dimen\-sional normal distribution with mean $ \MU=(\mu_1
,\ldots, \mu_L)^T$ $\in \R^L$ and the positive definite $ L\times L$ covariance
matrix $ \SIGMA$.  This model includes the case, where the $ L$ alternatives are
simulated independently (i.e.\ with different scenarios), then $ \SIGMA$ is a
diagonal matrix. $ \MU$ and $ \SIGMA$ are assumed to be unknown and we want to
extract information about $ \MU$ from the observations $ \X$. As we assume that
the simulations are performed on a computer, we may identify the $ k$-th
scenario with $ z_k$, the $ k$-th seed for the random generator of the
observations.
 
We take a Bayesian point of view and assume that the unknown parameters
$ \MU$ and $ \SIGMA$ are themselves observations of random variables $ W$ and $
S$ having a prior distribution with density $ \pi(\MU,\SIGMA)$. We do not assume
any specific prior knowledge about the parameters, therefore we shall use the
so-called non-informative prior distribution (see \cite{DeGroot2004d}) with
\begin{equation}\label{eq:uninformed}
\pi(\MU,\SIGMA)\,\propto\,\det(\SIGMA)^{-\frac{\nu_0+L+1}{2}}
\end{equation}
where $ \propto$ means that the right hand side gives the density $ \pi$ up
to some multiplicative constant that does not depend on $ \MU$ or $ \SIGMA$. $
\nu_0$ is a so-called hyper-parameter that allows to control the degree of
uncertainty about $ \SIGMA$ and is set to $ L-1$ in our experiments.

\subsection{Iterative allocation of simulation runs}
\label{subsec:IterativeAllocation}

In the first stage, all alternatives are simulated $ n_0$ times for a fixed
number $ n_0$. From iteration $ n_0+1$ on, the simulation budget $ b$ is
allocated sequentially to the alternatives, depending on the observations made
so far.
This may result in samples with \emph{missing data}. 

We assume the following particular CRN sampling scheme: if in iteration $ k>n_0$
simulations are allocated to alternative $ i$, then these simulations start with
scenario (seed) $z_{l_0}$, where $ l_0$ is the smallest number $ l$ such that $
z_l$ has not yet been used for alternative $ i$, in other words, $z_{l_0-1}$ is
the last scenario used for alternative $ i$, see the example in Figure
\ref{fig:SamplingScheme}.

\begin{figure}[htb]
  \centering
  \includegraphics[width=7cm]{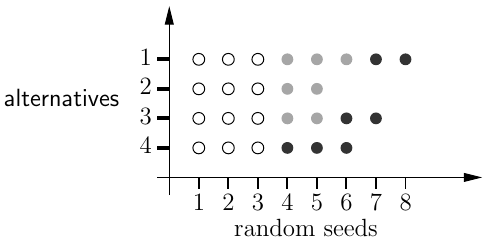}
  \caption{The first five iterations of a sampling scheme with $ L=4, n_0=3$
    and budget $ b=7$. The first $ n_0=3$ iteration use complete observations, the
    allocation of the $ b$ simulations of the forth iteration are
    marked in gray, those of the fifth in black. $ 8$ different scenarios (or
    seeds) are in use.  }
  \label{fig:SamplingScheme}
\end{figure}

This results in samples that may contain different numbers of observations for
different alternatives, but values may be missing only at the end of the sample,
a so-called \emph{monotone} pattern of missing data. As this pattern does not
depend on the unobserved data (but only on the observed ones via the allocation
rule) it is also \emph{ignorable}, see \cite{schafer1997analysis} for an exact
definition of these terms. Although this may seem rather artificial, it is
crucial for the distribution analysis below. It is easily implemented by a
suitable book-keeping of the random seeds.

We use
\begin{equation}\label{eq:DefXiP}
 \X_{i\Pkt}:=(X_{i1},X_{i2} ,\ldots, X_{in_i})
\end{equation}
to denote the random (row) vector of the $ n_i$ (consecutive) observations
produced for the $ i$-th alternative, $ i=1 ,\ldots, L$, under this scheme and,
similarly, $ \x_{i\Pkt}:=(x_{i1},x_{i2} ,\ldots,$ $x_{in_i})$ for a specific
sample.

Let $ \n=(n_1 ,\ldots, n_L)\in \N^L$ be the vector of the present sample sizes
for each of the $ L$ alternatives, then the variables $ \X_{i\Pkt}, i=1 ,\ldots,
L,$   can be collected into
a matrix-like scheme with possibly different row lengths $ n_i$ :
\begin{equation}\label{eq:Xn}
\X_{(\n)}:= \X_{(n_1 ,\ldots, n_L)}:=
\begin{pmatrix}
   X_{11},& X_{12},&\ldots &X_{1n_1}\\
  X_{21},& X_{22},&\ldots &X_{2n_2}\\
\vdots & \vdots& & \vdots\\
  X_{L1},& X_{L2},&\ldots &X_{Ln_L} 
\end{pmatrix}.
\end{equation}
Columns $ \X_{\Pkt k}$ may be incomplete for $ k>n_0$, as the $ k$-th scenario
or seed may not have been allocated to all alternatives. In the example in
Figure~\ref{fig:SamplingScheme}, columns $6,7,8$ are incomplete.  Let $
\R^{L\otimes \n}$ denote the set of possible samples $ \x_{(\n)}$ that may be
observed with $ \X_{(\n)}$ for a particular size vector $ \n=(n_1 ,\ldots,
n_L)$. An \emph{allocation rule} is a mapping $Q: \R^{L\otimes \n} \to \N^L$
that determines the numbers $ q_1 ,\ldots, q_L$ of additional simulations for
each alternative. Then $ \n$ is updated to $ \n':=(n_1+q_1 ,\ldots, n_L+q_L)$
and the new observations are added at the end of each line of $ \x_{(n)}$ to
form the new sample $ \x_{(\n')}$, an element of $ \R^{L\otimes \n'}$.

Note, that with this sampling scheme, the simulations in different iterations
need not be independent as for some alternatives we may have to \emph{re-use}
scenarios that have already been used in earlier iterations for other
alternatives. E.g.\ in Figure \ref{fig:SamplingScheme}, the next simulation with
alternative $2$ would have to use seed $ z_6$.  This means that we have to keep
random seeds until all alternatives have been simulated with this seed. A new
seed resulting in an independent observation is used for the simulation of some
alternative $ i$ only if $ n_i=\max\{n_1 ,\ldots, n_L\}$, that is all seeds used
so far have been applied to $ i$. This would be the case for alternative $ 1$ in
Figure~\ref{fig:SamplingScheme}.

The allocation schemes we use below depend on the posterior distribution of the
mean $ W$ given data $ \x\in  \R^{L\otimes \n}$ which is determined in the next Subsection.

\subsection{Likelihood and posterior distributions with missing data}\label{subsec:MissingData}

The likelihood function of incomplete samples as described in the last section
are examined in great detail and generality e.g.\ in \cite{schafer1997analysis} and
\cite{Dominici2000}. Our case is comparatively simple, as our sampling scheme
guarantees monotone and ignorable patterns of missing data.

To simplify notation, let the set of alternatives $ \L=\{1 ,\ldots, L\}$ be ordered
such that for the present data $ \x =\x_{(\n)}\in  \R^{L\otimes \n}$ with $ \n=(n_1 ,\ldots,
n_L)$ we have
\begin{equation}\label{eq:Nordered}
n_1\ge n_2\ge \cdots \ge n_L.
\end{equation}

\begin{figure}[tb]
  \centering
  \includegraphics[width=8cm]{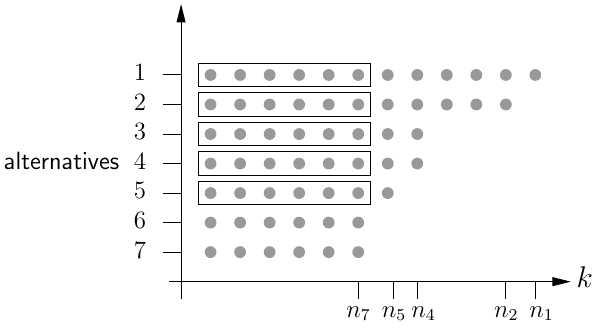}
  \caption{A sample with $ \L=\{1 ,\ldots, 7\}$ and $ n_1=12, n_2=11, n_3=n_4=8,
    n_5=7, n_6=n_7=6$. $\xx^{(n_6)}_{[<6]} $ contains all $ 5$ sample means from
    the boxed values. }
\label{fig:MonotoneSample}
\end{figure}

We need projections of vectors and matrices to components corresponding to
subsets of the alternatives $ \L=\{1 ,\ldots, L\}$.  For $ \y=(y_1 ,\ldots, y_L)^T\in \R^L$ and $ i\in \{1
,\ldots, L\}$ let
\begin{equation}\label{eq:DefProjVector}
  \y_{[<i]}:= (y_1 ,\ldots, y_{i-1})^T\in \R^{i-1},\quad \text{ and }\quad  \y_{[\le i]}:= (y_1 ,\ldots, y_{i})^T\in \R^{i}.
\end{equation}
Similarly, for the $ L\times L$ matrix $ \SIGMA=(\sigma_{k,l})_{k,l = 1 ,\ldots,
  L}$ we define
\begin{eqnarray}
  \SIGMA_{[\le i]}&:=&(\sigma_{kl})_{k,l = 1 ,\ldots, i}, \qquad  \SIGMA_{[<i]}:=(\sigma_{kl})_{k,l = 1 ,\ldots, i-1}, \notag\\
  \SIGMA_{[i,<i]}&:=& (\sigma_{i1} ,\ldots, \sigma_{i,i-1})\quad \text{ and}\quad \beta_i := \Sigma_{[i,<i]}\big(\Sigma_{[<i]}\big)^{-1} \label{eq:DefBeta}.
 \end{eqnarray}
 For $ 1\le i\le k\le L$ we have $ n_i\ge n_k$ and we may define the sample mean
 of alternative $ i$ restricted to the first $ n_k$ observations
\begin{equation}\label{eq:DefMean}
\xx^{(n_k)}_i:=\frac{1}{n_k} \sum_{j=1}^{n_k}x_{ij} .
\end{equation}
Then ${\xx}_{i}:= \xx^{(n_i)}_i$ is the sample mean of alternative $ i$ over all
$ n_i$ observations.  See Fig.~\ref{fig:MonotoneSample} for an illustration of $
\xx^{(n_i)}_{[<i]}$.

For the dominance probabilities we need the posterior distribution of $ W$ given
an incomplete observation $ \x$. We first assume, that the covariance matrix is
known and that the uninformative prior $ \pi(\mu)\propto 1$ for the mean is used.

\begin{Theorem}\label{the:posterior}
  Let the conditional distribution of (the columns) $ \X_{\Pkt k}, \ k=1,2
  ,\ldots$, be \iid $ \NN_L(\MU,\SIGMA)$ distributed given $ W=\MU$ and $
  S=\SIGMA$. Assume that $ W$ has the non-informative prior $ \pi(\mu)\propto
  1$. Let the (possibly incomplete) data $ \x=\x_{(\n)}\in \R^{L\otimes \n}$
  with $ \n=(n_1 ,\ldots, n_L),\ n_1\ge n_2\ge \cdots \ge n_L$ be given as
  described in Section~\ref{subsec:IterativeAllocation}.

  Then the posterior distribution of $ W$ given $ \X=\x, S=\SIGMA$ is an $ L$-dimensional
  Normal distribution $ \mathcal{N}_L(\NU,\LAMBDA)$ with mean $\NU=(\nu_1
  ,\ldots, \nu_L) $ where
  \begin{equation}\label{eq:DefNuTheorem}
\nu_1 := {\xx}_{1}, \qquad
 \nu_i:= {\xx}_i +\beta_i( \NU_{[<i]} - {\xx}_{[<i]}^{(n_i)})\quad \text{ for } i=2 ,\ldots, L,
\end{equation}
and   covariance matrix $ \LAMBDA := \LAMBDA_{[\le L]}$ where  
\begin{equation}\label{eq:DefLambdaTheorem}
  \LAMBDA_{[\le 1]}=\frac{\sigma_{11}}{n_1},\qquad \LAMBDA_{[\le i]}= \left(\begin{array}{c|l}
      \LAMBDA_{[<i]} &\LAMBDA_{[<i]}\ \beta_i^T \\[0.9ex]\hline
      \rule[0.5ex]{0pt}{2ex}\beta_i\ \LAMBDA_{[<i]} & \frac{\sigma_{ii}}{n_i} + \beta_i(\LAMBDA_{[<i]}-\frac{1}{n_i}\SIGMA_{[<i]})\beta_i^T
\end{array}\right)
\end{equation}
 for $ i=2 ,\ldots, L$.
\end{Theorem}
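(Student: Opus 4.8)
The plan is to exploit the monotone structure of the missing-data pattern \eqref{eq:Nordered} together with the standard conditional factorization of a multivariate normal, so that the problem decouples into $L$ independent one-dimensional problems. First I would recall that for $\X_{\Pkt k}\sim\N_L(\MU,\SIGMA)$ the joint density factors as $f(x_{1k})\,f(x_{2k}\mid x_{1k})\cdots f(x_{Lk}\mid x_{1k},\dots,x_{L-1,k})$, where the conditional law of $X_{ik}$ given $\X_{[<i],k}=x_{[<i],k}$ is $\N\bigl(\alpha_i+\beta_i x_{[<i],k},\,\tau_i^2\bigr)$ with intercept $\alpha_i:=\mu_i-\beta_i\MU_{[<i]}$, known regression vector $\beta_i$ from \eqref{eq:DefBeta}, and known conditional variance $\tau_i^2:=\sigma_{ii}-\beta_i\SIGMA_{[<i]}\beta_i^T$. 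The crucial use of \eqref{eq:Nordered} is that whenever $x_{ik}$ is observed (i.e.\ $k\le n_i$) the conditioning values $x_{[<i],k}$ are observed as well, since $n_1,\dots,n_{i-1}\ge n_i\ge k$. Hence the observed-data likelihood is exactly $\prod_{i=1}^{L}\prod_{k=1}^{n_i} f(x_{ik}\mid x_{[<i],k})$, with no contribution from the unobserved tail (ignorability of the pattern).

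Next I would reparametrize $\MU\mapsto\bm{\eta}=(\eta_1,\dots,\eta_L)$ by the unit lower-triangular linear map $\eta_1:=\mu_1$, $\eta_i:=\mu_i-\beta_i\MU_{[<i]}=\alpha_i$. This map has Jacobian determinant $1$, so the flat prior $\pi(\mu)\propto1$ transports to a flat prior on $\bm{\eta}$. Collecting the likelihood above by the index $i$, the factor depending on $\eta_i$ is the likelihood of the Gaussian location model in which $x_{ik}-\beta_i x_{[<i],k}$, $k=1,\dots,n_i$, are \iid $\N(\eta_i,\tau_i^2)$. Because these factors involve disjoint coordinates of $\bm{\eta}$, the posterior of $\bm{\eta}$ factorizes into independent components, each a conjugate normal update with flat prior:
\[
\eta_i\mid\X=\x\ \sim\ \N\!\Bigl(\bar{\x}_i-\beta_i\bar{\x}_{[<i]}^{(n_i)},\ \tfrac{\tau_i^2}{n_i}\Bigr),
\]
which is proper for every $n_i\ge1$. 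The restricted means $\bar{\x}_{[<i]}^{(n_i)}$ appear precisely because only the first $n_i$ scenarios enter the $i$-th factor.

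Finally I would transform back via $W_i=\eta_i+\beta_i W_{[<i]}$, the inverse (again unit lower-triangular) map, which keeps the posterior Gaussian. Writing $W=A^{-1}\bm{\eta}$ with $A$ the matrix of the reparametrization gives $\NU=A^{-1}\E[\bm{\eta}\mid\x]$, i.e.\ $\nu_i-\beta_i\NU_{[<i]}=\bar{\x}_i-\beta_i\bar{\x}_{[<i]}^{(n_i)}$, which is the recursion \eqref{eq:DefNuTheorem}. For the covariance I would argue by induction on $i$: since $\eta_i$ is posterior-independent of $\bm{\eta}_{[<i]}$, and $W_{[<i]}$ is a function of $\bm{\eta}_{[<i]}$ alone, one gets $\cov(W_i,W_{[<i]})=\beta_i\LAMBDA_{[<i]}$ and $\mathrm{Var}(W_i)=\tau_i^2/n_i+\beta_i\LAMBDA_{[<i]}\beta_i^T$, which assemble into the block form \eqref{eq:DefLambdaTheorem} (the base case $\LAMBDA_{[\le1]}=\sigma_{11}/n_1$ being immediate).

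The main obstacle is the first step: one must carefully justify that the monotone CRN pattern makes the likelihood factor cleanly by conditional index, so that the $i$-th factor sees exactly the first $n_i$ columns and nothing more, and that the improper flat prior still yields a proper posterior. Once this decoupling is in place, the mean and covariance recursions are routine linear-algebra bookkeeping of the triangular change of variables.
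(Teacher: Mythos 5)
Your proposal is correct, and its first half coincides with the paper's own route: the paper's Appendix proves exactly your likelihood factorization as a separate result (Theorem~\ref{the:likelihoodFct}), organizing the product by complete column blocks and then factoring each $l_i$-dimensional normal density into conditionals via \eqref{eq:Ncond}, arriving at \eqref{eq:LikeliFktKompl}, which is your $\prod_i\prod_{k\le n_i} f(x_{ik}\mid x_{[<i],k})$ after collapsing each inner product to a density of the sample mean. Where you genuinely diverge is in how the recursions \eqref{eq:DefNuTheorem} and \eqref{eq:DefLambdaTheorem} are extracted. The paper stays in the $\MU$-coordinates: it reads the factorized posterior as the chain of conditionals $f_{W_1\mid\X}$, $f_{W_i\mid\X,W_{[<i]}}$ of an $L$-dimensional Gaussian and then ``undoes'' the conditioning with the tower property for the mean and with conditional covariance identities (law of total variance in \eqref{eq:CovEcke}, the covariance-of-conditional-expectation identity in \eqref{eq:CovSpalte}) to obtain the blocks of $\LAMBDA_{[\le i]}$. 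You instead pass to the innovation parameters $\eta_i=\mu_i-\beta_i\MU_{[<i]}$ via a unit lower-triangular change of variables, observe that the flat prior and the factorized likelihood make the $\eta_i$ posterior-independent with explicit proper normal marginals, and then recover $\NU$ and $\LAMBDA$ by pushing the product measure back through the inverse triangular map. The two derivations are equivalent in content, but yours buys a cleaner justification of two points the paper handles only implicitly: that the improper prior yields a proper posterior (each one-dimensional conjugate update is proper for $n_i\ge1$), and that the off-diagonal blocks $\beta_i\LAMBDA_{[<i]}$ follow from a one-line independence argument rather than from the conditional covariance calculus. The paper's version, in turn, avoids introducing the auxiliary parametrization and keeps every step phrased directly in terms of the posterior of $W$. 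Your one flagged obstacle --- that the monotone CRN pattern makes the $i$-th factor see exactly the first $n_i$ columns --- is precisely what \eqref{eq:Nordered} and the reordering from \eqref{eq:Likelihood1} to \eqref{eq:Likelihood2} deliver, so there is no gap.
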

A proof of Theorem \ref{the:posterior} is essentially contained in
\cite{schafer1997analysis}, it is sketched in the Appendix.

For the case of an \emph{unknown} covariance matrix $ \SIGMA$ and prior
distribution as in \eqref{eq:uninformed}, a factorization of the likelihood is
again given in \cite{schafer1997analysis}, but only for a complex
parameterization that is described in the Appendix. There seems to be no way to
obtain a closed expression for the posterior distribution of $ W$ in this
case. We therefore use the results of Theorem \ref{the:posterior} and replace
the unknown $ \SIGMA$ by its estimate, adapting parameters similarly as it is done in the case of
complete observations.

For the estimation of $ \SIGMA_{[<i]}=\big(\sigma_{kl}\big)_{1\le k,l<i}$ and $
\beta_i$ we only consider alternatives $ 1 ,\ldots, i-1$ that all have sample
sizes $ \ge n_{i-1}$. We may therefore use the (maximum likelihood) estimates
\begin{equation}\label{eq:DefsigEst}
  \hat{\sigma}^{(n_{i-1})}_{kl} =  \hat{\sigma}^{(n_{i-1})}_{kl}(\x):= \frac{1}{n_{i-1}}\sum_{m=1}^{n_{i-1}}(x_{km} -{\xx}^{(n_{i-1})}_k)(x_{lm}-{\xx}_l^{(n_{i-1})})
\end{equation}
for $ 1\le k,l<i$ and $ \x=\x_{(\n)}$ and put
\begin{eqnarray}
 \hat{\SIGMA}_{[< i]}= \hat{\SIGMA}_{[< i]}(\x)&:=&\Big( \hat{\sigma}^{(n_{i-1})}_{kl} \Big)_{k,l=1 ,\ldots, i-1} \label{eq:DefSIGEst}\\
\hat{\SIGMA}_{[i,<i]}=\hat{\SIGMA}_{[i,<i]}(\x)&:=& (\hat{\sigma}^{(n_{i-1})}_{i1} ,\ldots, \hat{\sigma}^{(n_{i-1})}_{i,i-1}),\notag\\
\hat{\beta}_i=\hat{\beta}_i(\x)&:=& \hat{\SIGMA}_{[i,<i]}(\hat{\SIGMA}_{[< i]})^{-1}.\label{eq:DefBetaEst}
\end{eqnarray}
Note that $ \hat{\SIGMA}_{[< i]}$ is not necessarily contained in $
\hat{\SIGMA}_{[\le i]}$ as the estimates use possibly different sample sizes $
n_{i-1}$ and $ n_i$.  In \eqref{eq:DefBetaEst}, we have to make sure that
$\hat{\SIGMA}_{[< i]} $ is nonsingular. From \cite{Dykstra1970} it is known that
if the sample size $ n_{i-1}$ of $\hat{\SIGMA}_{[< i]}$ fulfills $n_{i-1}> i-1$,
then $\hat{\SIGMA}_{[< i]}$ is positive definite with probability one.  This
could be guaranteed, if we require for the initial sample size $ n_0\ge L $ as
then $n_{i-1} \ge n_0 \ge L > i-1$.

We  now plug these estimates into the definition of the posterior means and
obtain
\begin{equation}\label{eq:NuHat}
  \begin{split}
     &\hat{\nu}_1=\hat{\nu}_1(\x):= {\xx}_{1}, \quad 
  \hat{\nu}_i=\hat{\nu}_i(\x):=  {\xx}_i +\hat{\beta}_i( \hat{\NU}_{[<i]} -
  {\xx}_{[<i]}^{(n_i)})\quad \text{ and } \\
  &\hat{\NU}(\x):=(\hat{\nu}_1 ,\ldots, \hat{\nu}_L).
\end{split}
\end{equation}
The case of $\hat{ \LAMBDA}$ is more complicated. To obtain an adequate
estimate $\hat{ \LAMBDA}$, we first look at the case of \emph{complete}
observations, i.e.\ with $ n=n_1 =\cdots = n_L>L$. Then it is well-known (see
e.g. \cite{DeGroot2004d}, 10.3) that with $ \SIGMA$ \emph{known} and an
non-informative prior distribution for the mean $ W$, the posterior distribution
of $ W$ would be Normal with 
\begin{equation}\label{eq:ComplSigmaKnown}
\text{mean $ \xx$ and  covariance matrix $ \frac{1}{n}\SIGMA$.}
\end{equation}
  If $
\SIGMA$ is \emph{unknown} with the non-informative prior distribution as in
\eqref{eq:uninformed}, the marginal posterior distribution of the mean $ W$ in
the complete observation case is an $ L$-dimensional $ \t$-distribution with $
n-L+\nu_0$ degrees of freedom, 
\begin{equation}\label{eq:ComplSigmaUnKnown}
\text{location parameter $ \xx$ and scale matrix
$\frac{1}{ n-L+\nu_0}\hat{\SIGMA}_{[\le L]}(\x)$}
\end{equation}
(see \cite{DeGroot2004d},10.3).

Switching to incomplete observations with \emph{known} covariance matrix,
Theorem \ref{the:posterior} tells us that the posterior mean $ \xx$ of
\eqref{eq:ComplSigmaKnown} has to be replaced by $ \NU$ and the covariance $
\frac{1}{n}\SIGMA$ of \eqref{eq:ComplSigmaKnown} has to be replaced by $
\LAMBDA$ as in \eqref{eq:DefLambdaTheorem}, reflecting the different sample
sizes for each alternative. Note, that for $n= n_1=\cdots=n_L$, we have $
\NU=\xx$ and $ \LAMBDA =\frac{1}{n}\SIGMA$. In the case of \emph{unknown} $
\SIGMA$, it therefore seems reasonable to approximate the posterior distribution
of the mean $ W$ by an $ L$-dimensional $ \t$-distribution with $ n_L-L+\nu_0$
degrees of freedom, location parameter $ \hat{\NU}(\x)$ and a suitable scale
matrix $ \hat{\LAMBDA}$. The scale matrix $ \hat{\LAMBDA}$ should be obtained
from $ \LAMBDA$ with the $ \sigma_{kl}$ replaced by their estimates in a similar
fashion as $\frac{1}{n} \SIGMA$ in \eqref{eq:ComplSigmaKnown} is replaced by $
\frac{1}{n-L+\nu_0}\hat{\SIGMA}$ in \eqref{eq:ComplSigmaUnKnown}. In particular, in the $
i$-th iteration of the recursive definition of $ \LAMBDA$ in
\eqref{eq:DefLambdaTheorem}, the constant factor $\frac{1}{ n-L+\nu_0}$ from
\eqref{eq:ComplSigmaUnKnown} should be replaced by $ \frac{1}{n_i-L+\nu_0}$ as
in this step the sample size $ n_i\ge n_L$ is used.

We therefore choose as approximation to the posterior covariance matrix
\begin{align}
  \hat{\LAMBDA}_{[\le 1]}=\hat{\LAMBDA}_{[\le 1]}(\x)&:=\frac{\hat{\sigma}^{(n_1)}_{11}}{n_1-L+\nu_0},\label{eq:LAMBDAHat}  \\
  \hat{\LAMBDA}_{[\le i]}= \hat{\LAMBDA}_{[\le i]}(\x)
  &:= \left(\begin{array}{c|l}
      \hat{\LAMBDA}_{[<i]} &\hat{\LAMBDA}_{[<i]}\ \hat{\beta}_i^T \\[0.9ex]\hline
      \rule[0.5ex]{0pt}{2ex}\hat{\beta}_i\ \hat{\LAMBDA}_{[<i]} &
      \frac{\hat{\sigma}^{(n_i)}_{ii}}{n_i-L+\nu_0}+  \hat{\beta}_i\ \big(
      \hat{\LAMBDA}_{[<i]} - \frac{1}{n_i-L+\nu_0}\hat{\SIGMA}_{[<i]}\big)\hat{\beta}_i^T
\end{array}\right) \notag
\end{align}
for $ i=2 ,\ldots, L$, and $
\hat{\LAMBDA}=\hat{\LAMBDA}(\x):=\hat{\LAMBDA}_{[\le L]}(\x)$.  As an additional
justification for that choice of $ \hat{\LAMBDA}$ we may add that for the
complete observation case with $ n:=n_1=\cdots=n_L$, our matrix $ \hat{\LAMBDA}$
is equal to the scale matrix $ \frac{1}{n-L+\nu_0}\hat{\SIGMA}_{[\le L]}$ as in
\eqref{eq:ComplSigmaUnKnown}, thus $ \hat{\LAMBDA}$ generalizes the complete
observation case to our situation with incomplete observations.
In the sequel, we therefore assume that the posterior distribution of $ W$ given
$ \X=\x$ is a $ L$-dimensional $ \t$-distribution with     $ n_L-L+\nu_0$ degrees of
freedom, location parameter $ \hat{\NU}(\x)$ and scale matrix $
\hat{\LAMBDA}$.

The posterior distribution of the mean $ W$ is used here to determine the
\emph{dominance probability} $ p_{ij}^\delta $ of pairs $ (i,j)$ of
alternatives, this is the posterior probability that alternative $ i$ is better
than $ j$, i.e.\ the posterior probability  of the event $ W_i\le
W_j+\delta$ for an \emph{indifference zone parameter} $ \delta\ge 0$:
  \begin{equation}\label{eq:DefDP}
  p_{ij}^\delta:=\P[W_i\le W_j+\delta\mid \X=\x]\ =\ \P[W_i-W_j\le \delta\mid \X=\x].
  \end{equation}
Let $ G(\cdot\ ;k,a,b)$ denote the distribution function of the one-dimensional $
t$-distribution with $ k$ degrees of freedom, location parameter $ a$ and scale
parameter $ b$.  From the discussion above and standard 
  properties of the multivariate $ \t$-distribution we may then conclude that
  \begin{equation}\label{eq:DP}
p_{ij}^\delta \approx \ G(\delta;\ n_L-L+\nu_0,\; \hat{\nu}_i-\hat{\nu}_j,\;
  \hat{\Lambda}_{ii}+\hat{\Lambda}_{jj}-2\hat{\Lambda}_{ij}).
\end{equation}
The dominance probability $ p_{ij}^\delta$ of alternative $i $
over alternative $ j$  allows to
lower bound the PCS for ranking and selection targets that are based on pairwise
comparison as they are introduced now.

\section{General Ranking and  Selection Schemes} \label{sec:RankSelect}
\subsection{Target and Selection}\label{subsec:TargetAndSelection}
Our ranking and selection scheme is a generalization of the approach in
\cite{Schmidt2006}.  We want to select alternatives from the set $\L= \{1
,\ldots, L\}$ according to their ranks under some performance measure, in our
case the (estimated) mean value. E.g.\ we want to select the $ m$ alternatives with the
lowest mean values and rank them as it is required in ant algorithms.  We
restrict ourselves to such selections that can be determined using pairwise
comparisons of alternatives and then apply \eqref{eq:DefDP} to bound the error
probability.  We use an abstract concept which is illustrated by some examples
below.

We first define a set $ A\subset \{1 ,\ldots, L\}$ of \emph{target ranks}. We
want to select those alternatives that have ranks from $ A$ with respect to
their estimated mean values. We also determine whether the selected alternatives
should be ranked according to these values.

We estimate the unknown means by the present posterior means $ \hat{\nu}_1(\x)
,\ldots$, $ \hat{\nu}_L(\x)$ and order them as
\begin{equation}\label{eq:OrderedPMeans}
 \hat{\nu}_{i_1}(\x)<\hat{\nu}_{i_2}(\x) < \cdots <
\hat{\nu}_{i_L}(\x).
\end{equation}
This is possible with probability one as we have continuous posterior
distributions for which $\P[W_i=W_j \text{ for some } i\not=j\mid \X=\x] =
0$. We then estimate the ranks of the alternatives by their ranks in
\eqref{eq:OrderedPMeans} and select those alternatives that have estimated ranks
in the target set $ A$, i.e.\ we select the
alternatives
\begin{equation}\label{eq:DefSel}
B=B(\x):=\{i_j\mid j\in A\},
\end{equation}
where the $ i_j$ are taken from \eqref{eq:OrderedPMeans}. The probability that
this is a correct selection is the posterior probability that the actual ranks
of the selected means $ W_l, l\in B$, are those required by the target set $ A$,
i.e.
\begin{equation}\label{eq:DefPCS}
PCS:=\P[\,\{\text{rank}_{W}(W_l)\mid l \in B\}=A \mid \X=\x\,]
\end{equation}
where $ \text{rank}_{\t}(t_j)$ denotes the rank of $ t_j$ within $ \t=(t_1
,\ldots, t_L)\in \R^L$. If it is required that the selected alternatives are
also ranked among themselves then  \eqref{eq:DefPCS} is
replaced by
 \begin{equation}\label{eq:DefPCSRanked}
   \begin{split}     
PCS:=\P\big[\,\{\text{rank}_{W}(W_l)&\mid l \in B\}=A  \text{ and } \\
& W_j\le W_l \text{ for } j,l\in B, j<l \ \mid \X=\x\,\big]
\end{split}
\end{equation}
We restrict ourselves here to target sets that can be described  by
pairwise comparisons, i.e.\ we assume that there is  a set $\rho_{AB}$ of pairs from  $
\L=\{1 ,\ldots, L\}$ (or, more formally, a  binary relation over $\L\times \L$)
such that 
\begin{equation}\label{eq:RhoCorrect}
  \begin{split}
    \{\text{rank}_{W}(W_l)&\mid l \in B\}=A \\
    &\iff W_i\le W_j \ \text{ for all } (i,j) \in \rho_{AB}.
  \end{split}
\end{equation}
Then \eqref{eq:DefPCS} becomes
\begin{equation}\label{eq:PCSRho}
PCS = \P\big[W_i\le W_j \ \text{ for all } (i,j) \in \rho_{AB}\mid \X=\x\, \big].
\end{equation}
If an additional ranking is required then $ \rho_{AB}$ must be extended by pairs
$ (j,l)$ with $ j,l\in B$ and $ j<l$, see \eqref{eq:DefPCSRanked}.  The
following examples show how these relations $ \varrho_{AB}$ may be obtained.
\begin{Example} Assume that \eqref{eq:OrderedPMeans} holds.
  \begin{enumerate}\renewcommand{\theenumi}{\textbf{\alph{enumi}}}\renewcommand{\labelenumi}{\theenumi)}
  \item If we want to select the alternative with smallest mean we put $
    A:=\{1\}$, $ B:=\{i_1\}$, and $\rho_{AB}:=\{(i_1,j)\mid j\in \L-\{i_1\}\}$.
Then we have 
\begin{align*}  
\{\text{rank}_{W}(W_l)\mid l \in B\}=A & \iff W_{i_1} \text{ has rank $ 1$ in
} (W_1 ,\ldots, W_L)\\
&\iff W_{i_1} \le W_j\ \text{ for all }
j\not=i_1, j=1 ,\ldots, L\\
&\iff W_i\le W_j \ \text{ for all } (i,j)\in \rho_{AB}.  
\end{align*}
\item If  we want to select the $ m$ best (minimal) alternatives for some $
  m\le L$, we put $A:= \{1 ,\ldots, m\}$ and $ B:=\{i_1 ,\ldots, i_m\}$. As
  characterizing relation we obtain $ \rho_{AB}:=\{(l,j)\mid l\in
  B, j\in \L-B\} $. Then
\begin{align*}
  \{\text{rank}_{W}(W_l)&\mid l \in B\}=A \\
&\iff W_{i_1} ,\ldots, W_{i_m} \text{
 have ranks $ 1 ,\ldots, m$ in } W_1 ,\ldots, W_L\\
 &\iff W_{i_k} \le W_j \ \text{ for all } k=1 ,\ldots, m, j\notin\{i_1 ,\ldots,
 i_m\}\\
 &\iff W_i\le W_j \ \text{ for all } (i,j)\in \rho_{AB}   
\end{align*}
\item If, in addition, the $ m$ best alternatives have to be ranked among
  themselves then we would choose
  \[ 
\rho_{AB}:=\{(i_1,i_2) ,(i_2,i_3)\ldots, (i_{m-1},i_m)\}\cup \{(i_m,j)\mid j\notin B\}.
  \]
This includes the case where a complete ranking of the alternatives is required where
\[ 
\rho_{AB}:=\{(i_1,i_2) ,(i_2,i_3)\ldots, (i_{L-1},i_L)\}.
\]
\item In a similar fashion $ A,B$ and $ \rho_{AB}$ may be defined to select the
  median or the span of $ (\mu_1 ,\ldots, \mu_L)$.   
\end{enumerate}
\end{Example}
In the iterative procedure below, the posterior means $ \hat{\nu}_1(\x) ,\ldots,
\hat{\nu}_L(\x)$ have to be determined after each iteration based on the new
observations. Therefore, $ B$ and $\rho_{AB}$ have to be re-calculated in each
iteration also.

\subsection{A Bound for the Probability of a Correct Selection}

Based on the characterizing relation $ \rho_{AB}$ we may now derive a simple
lower bound of Bonferroni type for the PCS defined in \eqref{eq:PCSRho} with an
indifference parameter $ \delta\ge 0$   as follows
\begin{eqnarray}
PCS^\delta &=& \P\Big[ W_i \le W_j+\delta \text{ for all } (i,j)\in \rho_{AB}\mid \X=\x\Big]\notag\\
&\ge&  1- \sum_{(i,j)\in \rho_{AB}} \P\big[  W_i > W_j+\delta \mid \X=\x\,\big]\notag\\
\notag\\
&=& 1- \sum_{(i,j)\in \rho_{AB}} \big(1-p_{ij}^{\delta}\big)\ =: LB^\delta(\x). \label{eq:Bonferroni}
 \end{eqnarray}
 where $ p_{ij}^\delta$ was defined in \eqref{eq:DefDP}.  Note that for these
 error bounds we need the dominance probabilities for pairs $ (i,j)\in
 \rho_{AB}$ only.

We now proceed to define our algorithm in full detail.

\section{The Algorithm BayesRS}\label{sec:TheAlgorithm}

Let the following items  be given:

$ \L=\{1 ,\ldots, L\}$ is the set of alternatives, $ A\subset \{1 ,\ldots, L\}$ is
the target set of ranks to be selected and possibly ranked, $ \alpha\in (0,1)$ is the bound for the
error probability, $ b\in \N$ is the simulation budget for each iteration and $
n_0\in \N$ is an initial sample size.{\leftmargin0.5cm}
\begin{description}
\item [\textbf{Initialization}]: \ Observe $ \X_{\Pkt j}$ for $ j=1 ,\ldots,
  n_0,$ i.e.\ make $ n_0$ complete observations and let $ \x$ denote the result.

  Determine the posterior means $ \hat{\nu}_j(\x), j\in \L,$ as in
  \eqref{eq:NuHat}, the selection $ B(\x)$ as in \eqref{eq:DefSel}, relation $
  \rho_{AB}$ as in \eqref{eq:RhoCorrect}  and the dominance
  probabilities $ p^\delta_{ij}, (i,j)\in \rho_{AB}$, as in \eqref{eq:DP}. Finally
  calculate the lower bound $ LB^\delta(\x)$ for the PCS as in \eqref{eq:Bonferroni}.\\[-1ex]
\item[\textbf{while}]  $ LB^\delta(\x)\ < \ 1-\alpha$\quad \textbf{do}
  \begin{itemize}
  \item apply an allocation rule $ Q$ to the sample $ \x$ to determine the number of
    additional simulation runs $ \q=(q_1 ,\ldots, q_L):=Q(\x)$,
  \item  perform $ q_i$ simulations with alternative $ i\in \L$, taking into
    account the common random numbers scheme as described in Section~\ref{subsec:IterativeAllocation},
 let $ \x$ be the extended data including the new simulation results,
  \item update the posterior means $ \hat{\NU}(\x)$, the selection $ B(\x)$,  the relation $ \rho_{AB}$,
    the dominance probabilities $ p^\delta_{ij}, (i,j)\in \rho_{AB}$, and the lower
    bound $ LB^\delta(\x)$.
  \end{itemize}
\item []\textbf{return } the present selection $ B(\x)$. 
\end{description}
In \cite{Gorder2012} it is shown that this algorithm terminates if $ \delta>0$
and if at least one simulation is allocated to each alternative in each iteration.

\section{Allocation strategies}\label{sec:Alloc}

In each of the iterations described in Section~\ref{subsec:IterativeAllocation}
it has to be decided how many simulations should be performed for each
alternative $ i\in \L$. 

We first look at a modified version of the so-called
optimal computing budget allocation (OCBA) strategy introduced by Chen (see for
example \cite{Chen2000,Chen2008,Chen2010}). OCBA allocation strategies try to
allocate a fixed simulation budget $b\in \N$ in such a way that the expected
value of the $PCS$ is maximized. As this 
is a difficult non-linear optimization problem, see e.g.\
\cite{chen2005alternative}, most versions of the OCBA solve a substitute problem
and try to maximize a lower bound as $ LB^\delta(\x)$.  For larger
instances, even this reduced problem can only be solved heuristically, see e.g.
\citep{Chen1996,Chen1997,Branke2007}.

We adapt this approach to our environment with dependent observations. Let $
p_{ij}^\delta(q_i,q_j)$ be an estimate of the dominance probability from
\eqref{eq:DP} with $ q_i$ and $ q_j$ additional simulations for alternatives $
i$ and $ j$. We assume as in \citep{Branke2007} that the current estimates of
the means in $\hat{\nu}_i-\hat{\nu}_j $ will not be affected substantially by
the additional simulations. The degree of freedom $ n_L-L+\nu_0$ of the $
t$-distribution will only change if additional simulations are performed with
alternative $ L$. The effect of additional simulations on the covariance is
difficult to foresee, we use a weighted mixture $ \Gamma_{ij}$ of the current
values $\hat{\Lambda}_{ii},\hat{\Lambda}_{jj},\hat{\Lambda}_{ij} $: 
\begin{align*}
  \Gamma_{ij}:=&\frac{n_{i}}{n_{i}+q_{i}}\hat{\Lambda}_{ii}\,+\,\frac{n_{j}}{n_{j}+q_{j}}\hat{\Lambda}_{jj}\\
  &-2\left[\frac{\hat{\Lambda}_{ii}}{\hat{\Lambda}_{ii}+\hat{\Lambda}_{jj}}\frac{n_{i}}{n_{i}+q_{i}}\,+\,\frac{\hat{\Lambda}_{jj}}{\hat{\Lambda}_{ii}+\hat{\Lambda}_{jj}}\frac{n_{j}}{n_{j}+q_{j}}\right]\hat{\Lambda}_{ij}.
\end{align*}
and put
\begin{align}
  p^\delta_{ij}(q_{i},q_{j}) &:=
\begin{cases}
  G\left(\delta;\,n_L-L+\nu_0\,,\hat{\nu}_{i}-\hat{\nu}_{j},\,\Gamma_{ij}\right)&
  \text{ if } i,j\not= L\\
  G\left(\delta;\,n_L+q_L-L+\nu_0\,,\hat{\nu}_{i}-\hat{\nu}_{j},\,\Gamma_{ij}\right)&
  \text{ if } i= L\text{ or }j=L
\end{cases},
\label{eq:DefpijTilde}\\
\text{  and }&\notag\\ 
\tilde{LB}^\delta(\x;\boldsymbol{q})\,&:=\,1-\sum_{(i,j)\in\rho_{AB}}\Big(1-p^\delta_{ij}(q_{i},q_{j})\Big).\notag
\end{align}
 
We then define a greedy heuristic (\GreedyOCBA) similar to \cite{Chen2010}
and \cite{Branke2007}. It assigns additional simulations to an alternative $ l$
proportional to the  increase in $ \tilde{LB}^\delta$ if the whole budget
was assigned to $ l$. With $\boldsymbol{e}_{l} $ denoting the $ l$-th unit
vector, this increase is given by
\begin{align}
  \Delta_{l} & :=  \tilde{LB}^\delta\left(\x;\, b\cdot\boldsymbol{e}_{l}\right)-\tilde{LB}^\delta(\x,\bm{0})\label{eq.GrdOCBA}\\
  & =
  \sum_{\overset{j\in\CandSet:}{(l,j)\in\rho_{AB}}}\left[p^\delta_{lj}(b,0)-p^\delta_{lj}(0,0)\right]
  +\negthickspace\negthickspace\sum_{\overset{i\in\CandSet:}{(i,l)\in\rho_{AB}}}\left[p^\delta_{il}(0,b)-p^\delta_{il}(0,0)\right],\notag
\end{align}
Note that $p^\delta_{il}(0,0)=
p^\delta_{il}$.  Then, \GreedyOCBA
distributes the simulation budget proportional to the weights $\Delta_{l}$, i.e.
\begin{equation}\label{eq:DefQ}
q_{l}:=\left\lfloor
  b\cdot\frac{\Delta_{l}}{\sum_{i=1}^{L}\Delta_{i}}\right\rfloor ,\quad l=1
,\ldots, L.
\end{equation}
A remaining budget $R:=b-\sum_{i=1}^L q_{i}$ is allocated according to the
largest remainder method. 

To determine the weights $ \Delta_l$, we need the dominance probabilities $
p_{ij}^\delta$ that have been calculated for lower bound $ LB^\delta$  in the
last iteration, but we also need the $2\cdot|\rho_{AB}|$ probabilities $
p^\delta_{ij}(b,0)$ and $p^\delta_{ij}(0,b) , (i,j)\in \rho_{AB}$.
We now introduce a new allocation scheme \DPW (for 'dominance probability
weighting') that works with the $ p_{i,j}^\delta$ of the lower bound $
LB^\delta$ only. 

For $ (i,j)\in \rho_{AB}$, a small value of $ p_{i,j}^\delta$
indicates that the assertion of $ W_i\le W_j+\delta$ needs additional data. We
therefore replace the weights from \eqref{eq.GrdOCBA} by
\begin{equation}\label{eq:DefDeltatilde} 
\tilde{\Delta}_l:=  \max\Big\{
(1-p^\delta_{ij})\frac{\hat{\Lambda}_{ll}}{\hat{\Lambda}_{ii}+\hat{\Lambda}_{jj}}\ \big|\ (i,j)\in
\rho_{AB},\ i=l \text{ or }l=j \Big\},
\end{equation}
Here, those alternatives $ l$ get a larger weight that (a) are part of a pair $
(i,j)\in \rho_{AB}$ with a small dominance probability $ p_{ij}^\delta$ and (b)
have a relatively large estimated variance $\hat{\Lambda}_{ll}$ compared to
their partners, such that
$\hat{\Lambda}_{ll}/(\hat{\Lambda}_{ii}+\hat{\Lambda}_{jj}) $ tends to be
large. In this case, additional simulations with $ l$ might decrease the
variance and increase the certainty of the dominance relation $ (i,j)$. We may
even restrict \eqref{eq:DefDeltatilde} to pairs $ (i,j)\in \rho_{AB}$ with dominance
probabilities
\[ 
p^\delta_{ij} < 1-\frac{\alpha}{|\rho_{AB}|}
\]
where $1- \alpha$ is the given lower bound required for the PCS. Hence we drop
those pairs $ (i,j)$ from \eqref{eq:DefDeltatilde}, that  give an above average
contribution to $ LB^\delta>1-\alpha$ already.  \DPW then allocates the budget proportional to the weights $
\tilde{\Delta}_1 ,\ldots,\tilde{\Delta}_L $ as in \eqref{eq:DefQ}.

In the empirical tests given in the next Section, \DPW performs slightly better than
\GreedyOCBA on the average though \GreedyOCBA is computationally more complex
and uses the information from $ 2|\rho_{AB}|$ additional dominance probabilities.

\section{Computational study}\label{sec:Computational-study}

We implemented the algorithm \BayesRS and compared its efficiency to other R\&S
procedures.  The main objectives of the study were:
\begin{itemize}
\item to compare different allocation strategies within the framework of\\
  \BayesRS, namely \GreedyOCBA, \DPW and \EqAlloc, the naive
  equal allocation of the budget to alternatives,
\item to show the efficiency of \BayesRS and \DPW for different
  correlations of the observations,
\item to compare \BayesRS with  \DPW to other
  R\&S-strategies from literature, namely \KN, see \cite{kim2006asymptotic}, and
  \PLUCK from \citep{qu2012ranking}.
\end{itemize}
We measured the performance of the allocation strategies for \BayesRS by the average
number of simulations each strategy needs until $ LB^\delta\ge 1-\alpha$, for
\KN and \PLUCK this goal had to be adapted. We also checked the empirical PCS,
i.e.\ the relative frequency of correct selections. For the experiments, we
implemented the different procedures with the free statistics software \texttt{R}.  In
\cite{Gorder2012} further studies show the efficiency of our allocation
strategies in the context of heuristic optimization methods like ant algorithms.

\subsection{Test setup}\label{subsec:TestSetup}
As our methods require normally distributed observations, we generated them by a
$\mathcal{N}_{L}(\MU,\SIGMA)$-random generator for different values of $ \MU$
and $ \SIGMA$, which, of course, were not known to the R\&S-strategies.  

Some parameters were fixed: the number of alternatives $ L=20$ (other values
showed similar behavior), the error probability $ \alpha=0.05$ and the
indifference zone parameter $ \delta=0.05$. The initial sample size was
$n_0=L=20$, the budget to be allocated in each iteration was $b= 10\cdot
L=200$ from which at least one simulation was allocated to each alternative in
each iteration. The parameter $ \nu_0$ for the prior probability of the
covariance matrix as in \eqref{eq:uninformed} showed best results in our setup
for $ \nu_0=L-1=19$ and was therefore fixed to this value throughout our tests.

  \begin{table}[htb]
    \centering
    \begin{tabular}[t]{l|l}
R\&S-case & $\MU$ \\\hline 
Best$_{1}$ &  $\mu_{1}=0,\;\;\mu_{2}=\cdots=\mu_{L}=1$\\
Best$_{10}$ &  $\mu_{1}=\cdots=\mu_{10}=0,\;\;\mu_{11}=\cdots=\mu_{L}=1$\\
Rank$_{10}$ & $\mu_{1}=0,\,\mu_{2}=1 ,\ldots, \mu_{10}=9, \mu_{11}=\cdots=\mu_{L}=10$\\
    \end{tabular}
    \caption{Different values for $\mu$  were used in the $ \MU$ -case ``ufc''.}
    \label{tab:MuCase}
  \end{table}

A \emph{basic scenario} consists of the following four variables  that are varied
in the tests:
\begin{enumerate}
\item We have examined three different \textbf{R\&S-cases} : in ``Best$_1$'' we
  want to select the best alternative, i.e.\ we use a target set $ A=\{1\}$ (see
  subsection~\ref{subsec:TargetAndSelection}), in ``Best$_{10}$'' we want to
  select the better half of the alternatives ($ A=\{1 ,\ldots, 10\}$) and in
  ``Rank$_{10}$'' we also want to rank these ten best alternatives.
\item Also three different \textbf{$ \MU$-cases} were used: In the unfavorable
  case ``ufc'', $ \MU$ is adapted to the R\&S-case chosen as described in Table
  \ref{tab:MuCase}. In the case ``inc'', $ \MU $ is the increasing sequence
  $\mu_{1}=0, \mu_{2}=1 ,\ldots, \mu_{L}=L-1$. In the case ``unif'', $ \mu_1
  ,\ldots, \mu_L$ are drawn randomly from the interval $ [0,100]$ with a minimal
  distance of at least $ \delta$. This
  is repeated $M_\mu:=15$ times, so that 15 simulations with different random $
  \MU$ are performed in the case ``unif''. 
\item To see how well our method works for different correlations among the
  observations, we created covariance matrices $
  \SIGMA=\big(\sigma_{ij}\big)_{i,j=1 ,\ldots, L}$ with a given joint
  correlation $ cor\in \{0.0,0.2,0.5, 0.7,0.9\}$.  To do so, variances $
  \sigma_{11} ,\ldots, \sigma_{LL}$ were chosen uniformly distributed in the
  interval $ [1,10]$, then we put $ \sigma_{ij}:=
  cor\sqrt{\sigma_{ii}\sigma_{jj}}$ for $ 1\le i,j\le L, i\not= j$. To include negative correlations, we
  complemented the above construction for $ cor \in \{-0.9, -0.5, -0.2\}$ with $
  \sigma_{ij}:= (-1)^{i-j} |cor|\sqrt{\sigma_{ii}\sigma_{jj}}$ resulting in a
  covariance matrix with alternating positive and negative entries.
  So we have eight different $ \SIGMA$\textbf{-cases}. For each case,
  $M_{cov}:=15$ different covariance matrices were constructed.
\item Finally, we distinguished two \textbf{distribution-cases}: in the case
  ``full'', we used the full joint posterior distribution 
  as  described above. In the case ``marginal'', we simply
  used the marginal posterior distributions of the alternatives and estimated
  the posterior means and variances by standard ML-estimators neglecting
  possible dependencies. This would be correct in the uncorrelated case with $
  cor=0.0$, but it is a great simplification in the other cases.
\end{enumerate}

All combinations of these parameters result in $ 3\times 3\times 8\times 2 =144$
different scenarios. For each scenario, $M_{cov}:=15$ covariance matrices $
\SIGMA$ were generated according to the $ \SIGMA$-case chosen and, if the $
\mu$-case ``unif'' was used, also $ M_\mu=15$ random vectors $ \MU$ were
generated. For each pair of $ (\MU,\SIGMA)$ thus chosen, $ M=20$ repetitions of
the different allocation strategies were performed. The averages from all these
trials are given below as \emph{mean no. of simulations}. We kept track of the random
seeds so that all strategies used the same random observations.

\subsection{The unfavorable  ${\mu}$-case}\label{ssec:ufc}

\begin{figure}[tb]  \centering
  \includegraphics[width=11cm,trim=0 40 20 55]{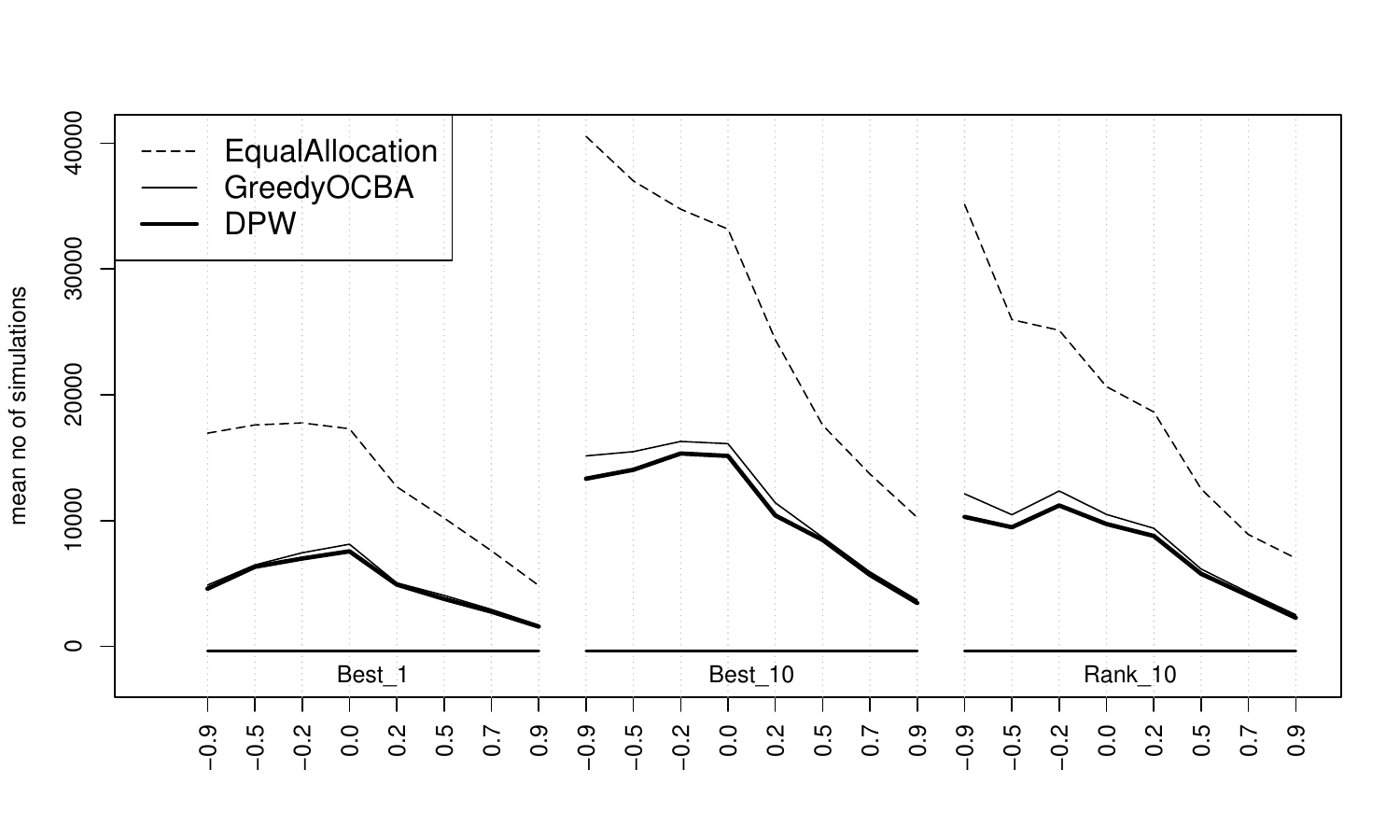}
  \caption{The unfavorable ${\mu}$-case: comparison of the allocation
    strategies \EqAlloc, \GreedyOCBA and \DPW for all R\&S-cases and all $
    \SIGMA$-cases.}
  \label{fig:ufc_CompFullDistrAbs}
\end{figure}
We first compare the three allocation strategies \EqAlloc, \GreedyOCBA and \DPW
within our algorithm \BayesRS. Here, \EqAlloc is mainly used to show how
difficult the problem instance is. In Figure \ref{fig:ufc_CompFullDistrAbs} the
mean number of simulations necessary to obtain a PCS $\ge 1-\alpha$ with
\BayesRS and the full posterior distribution is shown.  The $ x$-axis is divided
into the three R\&S-cases ``Best$_1$'', ``Best$_{10}$'' and ``Rank$_{10}$'', for
each of them the results for the eight $ \SIGMA$-cases $ cor= -0.9 ,\ldots, 0.9$
are given.
The number of simulations on the $ y$-axis is the mean over the $ M_{cov}=15$
different covariance matrices, each with the given correlation, and the $ M=20$
repetitions for each covariance matrix.

\begin{figure}[htb]
  \centering
  \includegraphics[width=12cm,trim=0 40 20 55]{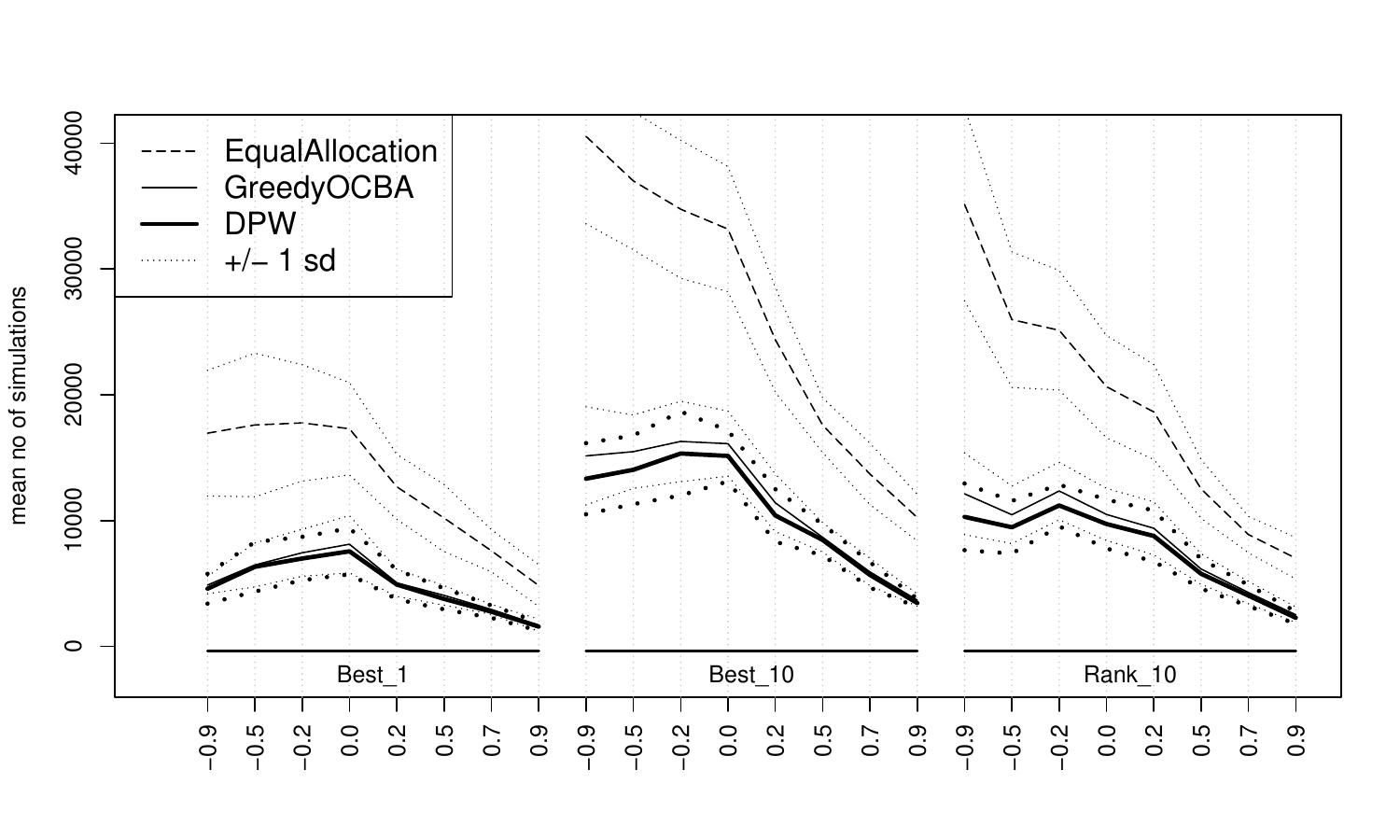}
  \caption{The unfavorable ${\mu}$-case:  The dotted lines indicate the standard deviation of the
    results.}
  \label{fig:ufc_CompFullDistrAbsSd}
\end{figure}
\begin{figure}[b]
  \centering
  \includegraphics[width=10cm,trim=0 40 20 55]{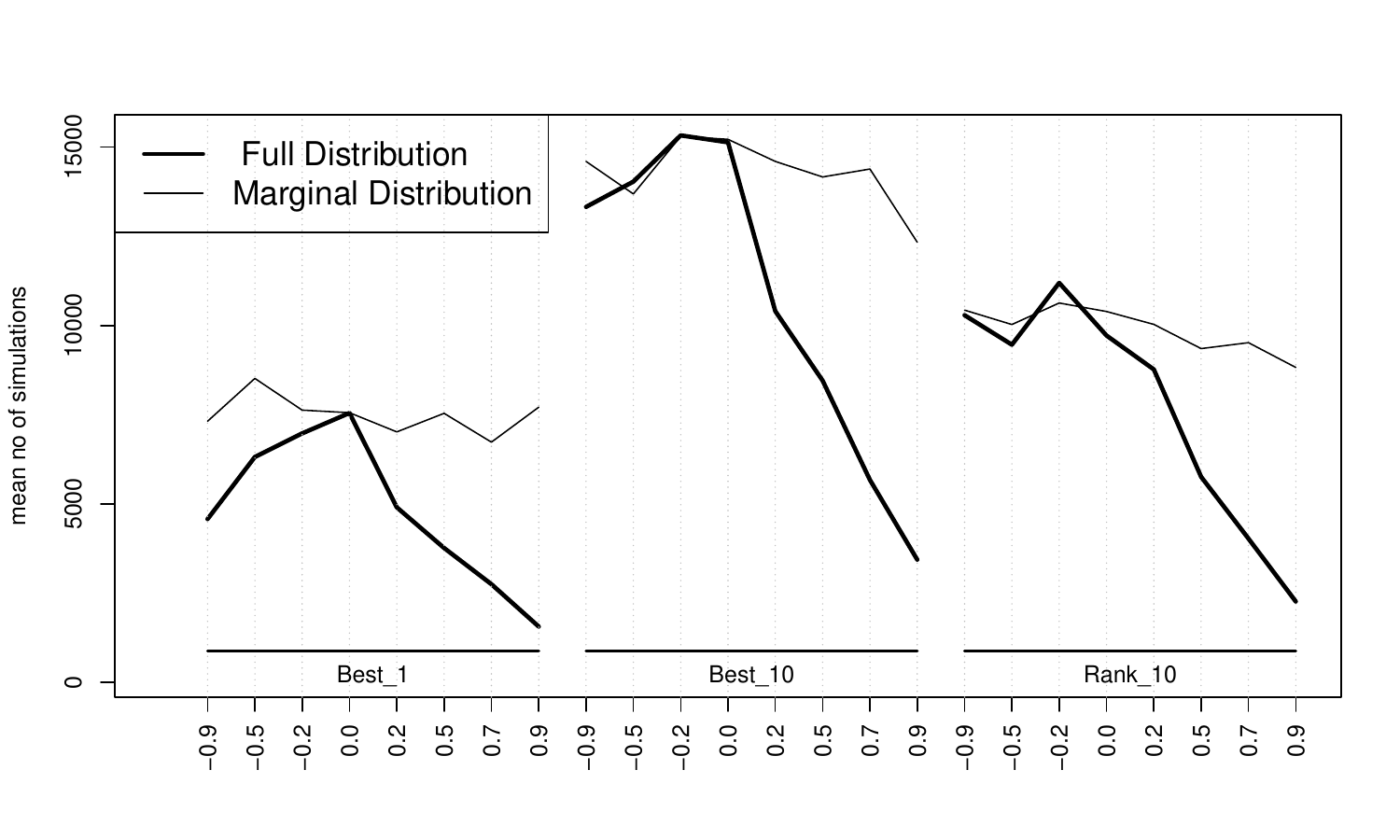}
  \caption{The unfavorable  ${\mu}$-case: comparison of the results based on
    full distributions with those using marginal distributions only (\BayesRS
    with allocation \DPW).}
  \label{fig:ufc_CompMargFullAbs}
\end{figure}

Figure \ref{fig:ufc_CompFullDistrAbs} clearly shows that the intelligent
allocation rules are much more efficient than the simple equal allocation. Also
in all scenarios, our new allocation rule \DPW performed slightly better than
the more classical \GreedyOCBA on the average.  However, the advantage of \DPW
over \GreedyOCBA is not significant as can be seen from Figure
\ref{fig:ufc_CompFullDistrAbsSd} where curves with $ \pm$ one standard deviation
have been added.

Figure \ref{fig:ufc_CompMargFullAbs} compares the two distribution cases for
\BayesRS with \DPW. The bold curve gives the mean number of observations
based on dominance probabilities as given in \eqref{eq:DP} (i.e.\ case
``full''), the light curve are results that are obtained from dominance probabilities based on
the marginal distributions of the alternatives only (case ``marginal''), neglecting
the possible dependencies.  Figure \ref{fig:ufc_CompMargFullAbs} clearly shows
that it is indeed worthwhile to calculate the full posterior distributions in
order to save simulations. In particular if observations are positively
correlated, the variance of the difference of two alternatives as used in
\eqref{eq:DP} is reduced under the full distribution leading to much smaller
sample sizes.

The question remains if the reduced sample sizes in the case ``full'' are large enough to select the
right alternatives. We skip the picture of the empirical PCS for this case as it
is equal to one for all strategies  (except for one case of \EqAlloc). This also
shows that our assumptions are quite conservative.

\subsection{The increasing $ \mu$-case}
Next we look at the $ \MU$-case ''inc'', the results are quite similar to the $
\MU$-case ``ufc''. Figure \ref{fig:inc_CompFullDistrAbs} shows the mean number
of simulations in the ``full'' distribution case and again, \DPW performs
slightly better than \GreedyOCBA. The comparison of the two distribution cases
``full'' and marginal is similar as in the case ``ufc''. Figure
\ref{fig:inc_empPCS} shows that the empirical PCS is well above the required
95\%.
\begin{figure}[tb]
  \centering
  \includegraphics[width=12cm,trim=0 40 20 55]{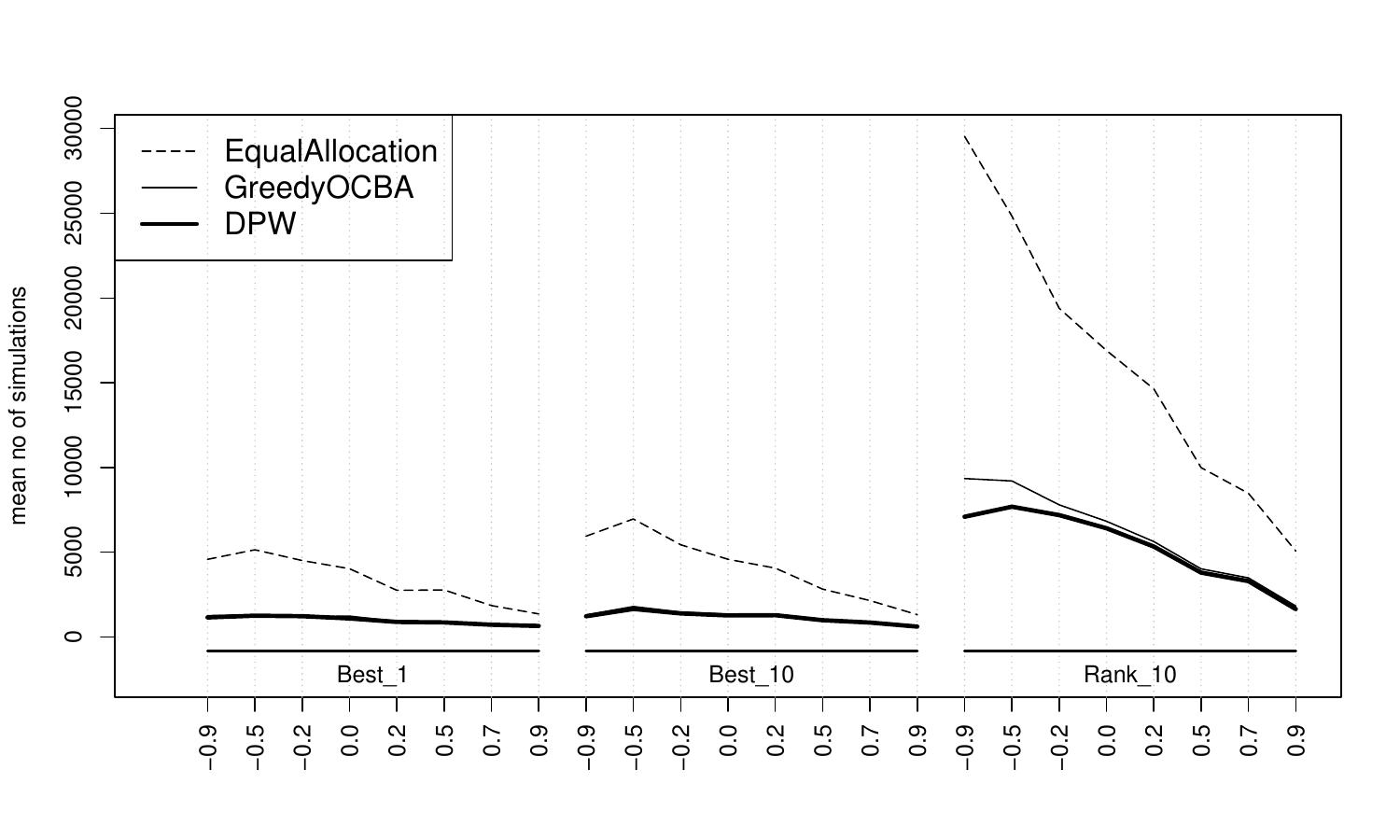}
  \caption{$ \MU$-case ``inc'': Comparison of the allocation strategies
    \EqAlloc, \GreedyOCBA and \DPW using the full
    posterior distribution of the unknown means.}
  \label{fig:inc_CompFullDistrAbs}
\end{figure}

\begin{figure}[hb]
  \centering
  \includegraphics[width=11cm,trim=0 40 20 55]{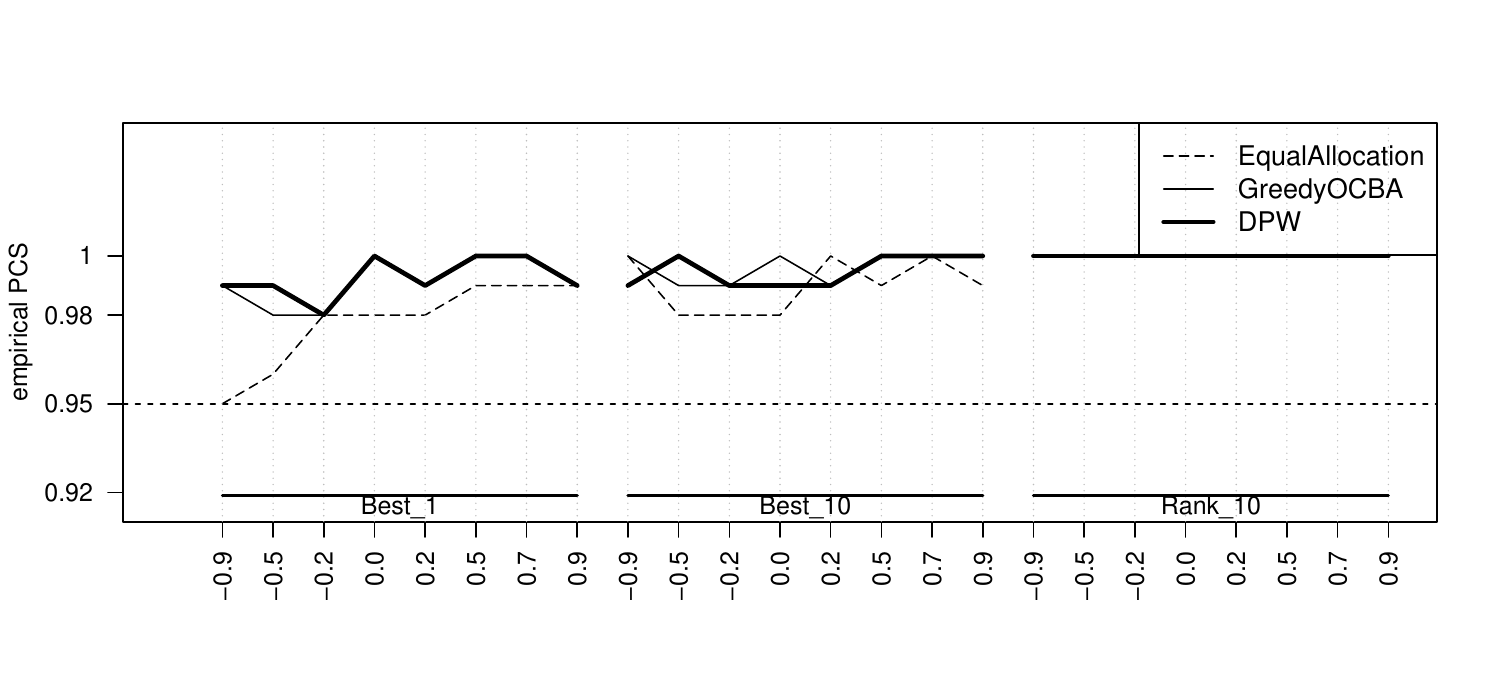}
  \caption{$ \MU$-case ``inc'': the empirical PCS of the three different
    allocation strategies for ``full'' distribution-case.}
  \label{fig:inc_empPCS}
\end{figure}

\subsection{The uniform  $ \mu$-case}
\begin{figure}[htb]
  \centering
  \includegraphics[width=12cm,trim=0 40 20 55]{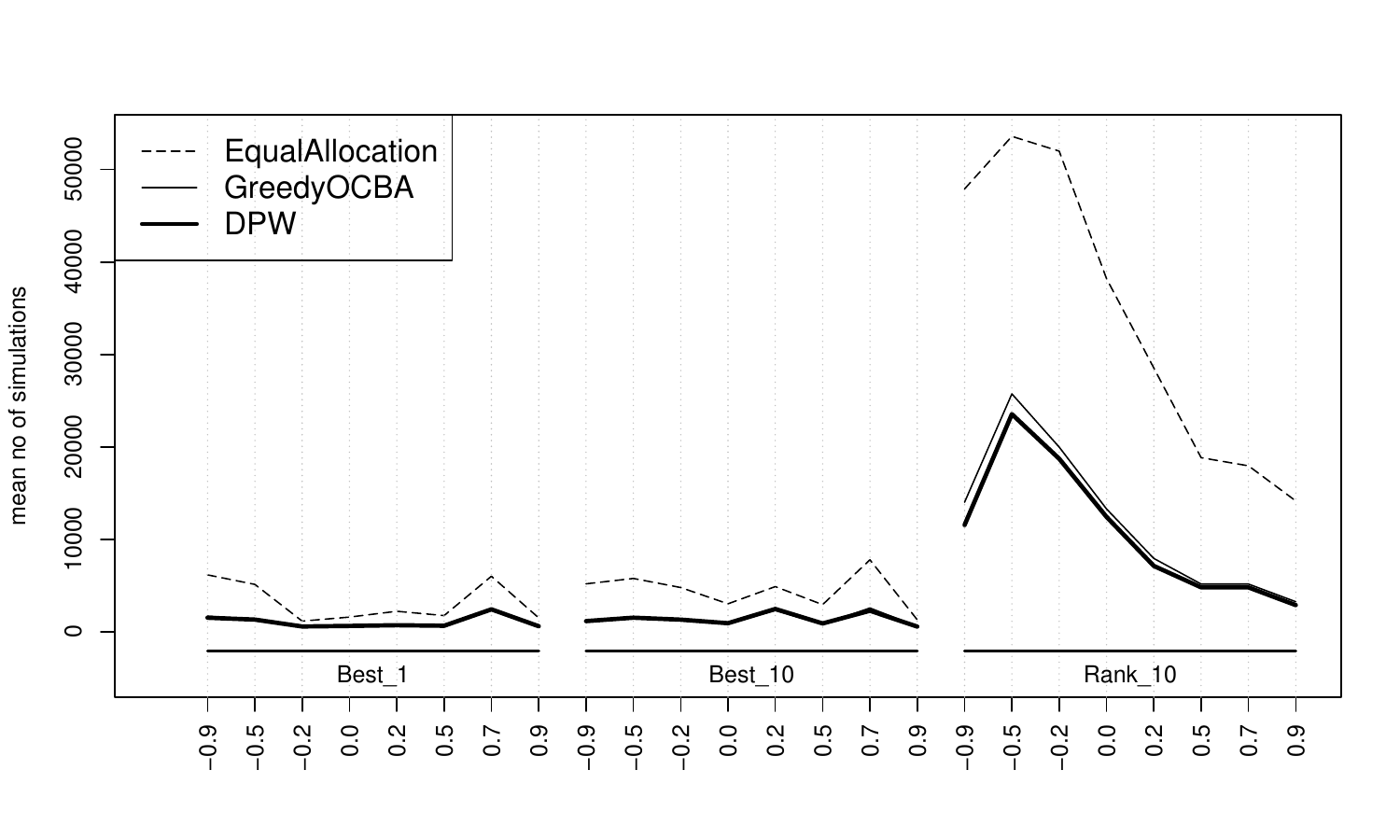}
  \caption{$ \MU$-case ``unif'': Comparison of the allocation strategies
    \EqAlloc, \GreedyOCBA and \DPW using the full
    posterior distribution of the unknown means.}
  \label{fig:unif_CompFullDistrAbs}
\end{figure}

In the ``unif''-case, results are means over $ M_{\mu}=15$  different $ \MU$  drawn
randomly from $ [0,100]^L$, over $ M_{cov}=15$ random covariance matrices and $
M=20$ repetitions. To make things comparable, we adapted the indifference zone
parameter to $0.05\times\min\{\mu_i-\mu_j\mid (i,j)\in \rho_{AB}\}$, i.e.\ we
adapted it to the actual minimal difference appearing in the random mean $\MU$.

Results vary considerably depending on the difficulty of $ \MU$, so that the
means give only a rough picture of the performance.  Figure
\ref{fig:unif_CompFullDistrAbs} shows that also for this case our new strategy
\DPW is at least as good as \GreedyOCBA though it uses less calculations. For
the most difficult task 'Rank\_10', the simple \EqAlloc could not solve all
instances within our limit of 120\,000 simulations, therefore it missed the
empirical PCS of 95\% in some cases.

\subsection{Comparison to the procedure \KN}
\begin{figure}[tb]
  \centering
   \includegraphics[width=11cm,trim=0 25 20 35]{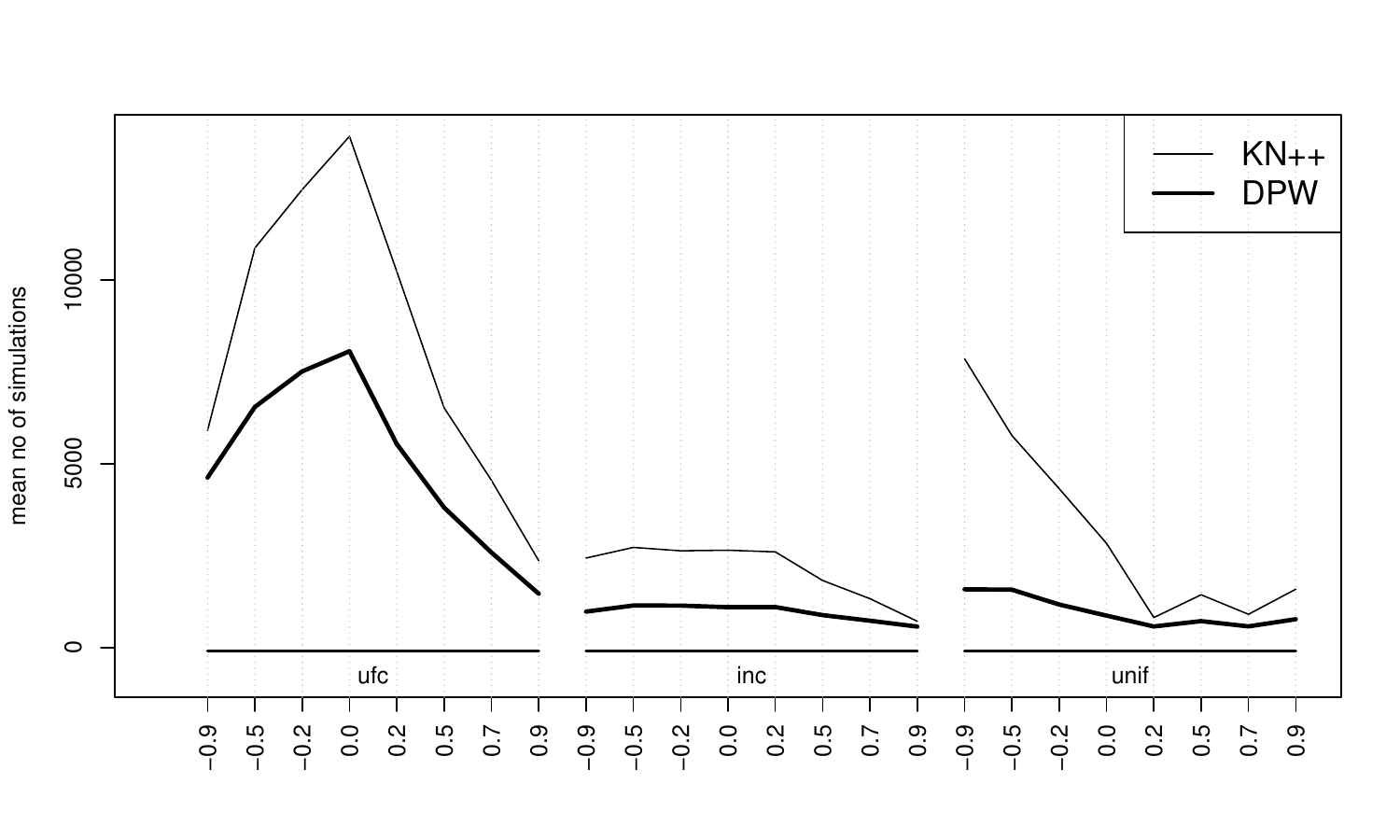}
   \caption{Comparison of \BayesRS with \DPW and $ \cal KN++$ for the R\&S-case
     Best$_1$. Here, we used a relaxed indifference parameter $
     \delta_{\text{\KN}}=0.5$ for \KN
     instead of $ \delta=0.05$ as in \BayesRS.    }
  \label{fig:CompDPWKN}
\end{figure}

In \cite{kim2006asymptotic} (see also \cite{KimNelsonUeber}) the sequential
procedure \KN is introduced. It uses a set of active alternatives each of which
is simulated once in each iteration. Then, alternatives that are inferior
to one of the other active ones are excluded from the active set and from
further simulations. The procedure stops as soon as there is only one active alternative
left  which is then selected as 'best'. The observations for the active
alternatives may be correlated but there is no need for re-use of random seeds
as the exclusion of alternatives forces a monotone pattern of missing values.

We restricted ourselves to the RS-case Best$_1$ (target set $ A=\{1\}$). \KN was
allowed to run until only one alternative was left, \BayesRS with allocation
\DPW used the stopping criterion PCS $\ge 1-\alpha$ as before. In many cases \KN
could not finish within our limit of 120\;000 trials, therefore we biased the
set-up in favor of \KN: instead of the indifference parameter $ \delta=0.05$ it
uses $ \delta_{\text{\KN}}=0.5$.  Figure \ref{fig:CompDPWKN} shows the
results. As we have only one RS-case, we collected all three $ \MU$-cases on the
$ x$-axis, each with the eight $ \SIGMA$-cases. It turned out that even with an
indifference parameter ten times larger, \KN needed much more simulations to
select the best alternative than \BayesRS, in particular in the unfavorable $
\MU$-case ``ufc''. The empirical PCS war almost one for both procedures.

\subsection{Comparison to the  procedure \PLUCK}
\PLUCK (\textbf{p}rojected \textbf{l}earning of \textbf{u}nknown
\textbf{c}orrelation with \textbf{k}nowledge gradients) as defined in
\cite{qu2012ranking} is a fully sequential procedure that allocates just one
simulation in each iteration. It aims to select the alternative with the largest
mean value, i.e.\ we have to use target set $ A=\{L\}$ in our \BayesRS procedure.

\PLUCK is a Bayesian procedure that assumes a Normal-InverseWishart joint prior
distribution for $ W$ and $ S$, i.e.\ the conditional distribution of the means
$ W$, given that $ S=\SIGMA$, is a Normal distribution $ \NN_L(\theta,\SIGMA/q)$
where $ \theta\in \R^L$ and $ q \in \R$ are known parameters. $ S$ has an
inverse Wishart distribution with known parameter $ \gamma\in \R$ and known
scale matrix $ \Gamma$. The posterior distribution after the observation of a
single alternative is approximated in \cite{qu2012ranking} by a
Normal-InverseWishart distribution which is used as prior distribution for the
next iteration. The parameters of this approximation are
updates of the prior parameters depending on the single observation in a rather
complicated way. In particular, for the posterior update $ \gamma^*$ of the
scalar $ \gamma$ a numerical solution to an equation is needed. We replaced this
by a rough approximation also mentioned in \cite{qu2012ranking} and put $
\gamma^*=\gamma+1/L$.

Instead of maximizing the PCS, \PLUCK uses the  value-of-information
approach. In each iteration $ n$, it determines the mean $
\hat{\theta}_n=(\hat{\theta}_{n,1} ,\ldots, \hat{\theta}_{n,L})$ of the
(approximate) posterior distribution of $ W$. Then $ \Theta_n:=\max_{j\in \L}
\hat{\theta}_{n,j}$ is the present estimate of the best (in this case largest)
unknown mean and the maximizing alternative $ j$ is the alternative to be
selected in the present iteration.  The \emph{value of information} for an
alternative $ i$ is the difference between $ \Theta_n$ and the expectation of $
\Theta_n$ after an additional simulation has been performed with alternative
$i$.  The next actual simulation is then allocated to an alternative  that
has the largest value of information. The performance of \PLUCK in an experiment
is evaluated by the \emph{opportunity cost} which is the difference between the
true largest mean and the mean selected by \PLUCK.

\begin{figure}[tb]
  \centering
  \includegraphics[width=12cm,trim=20 40 20 55]{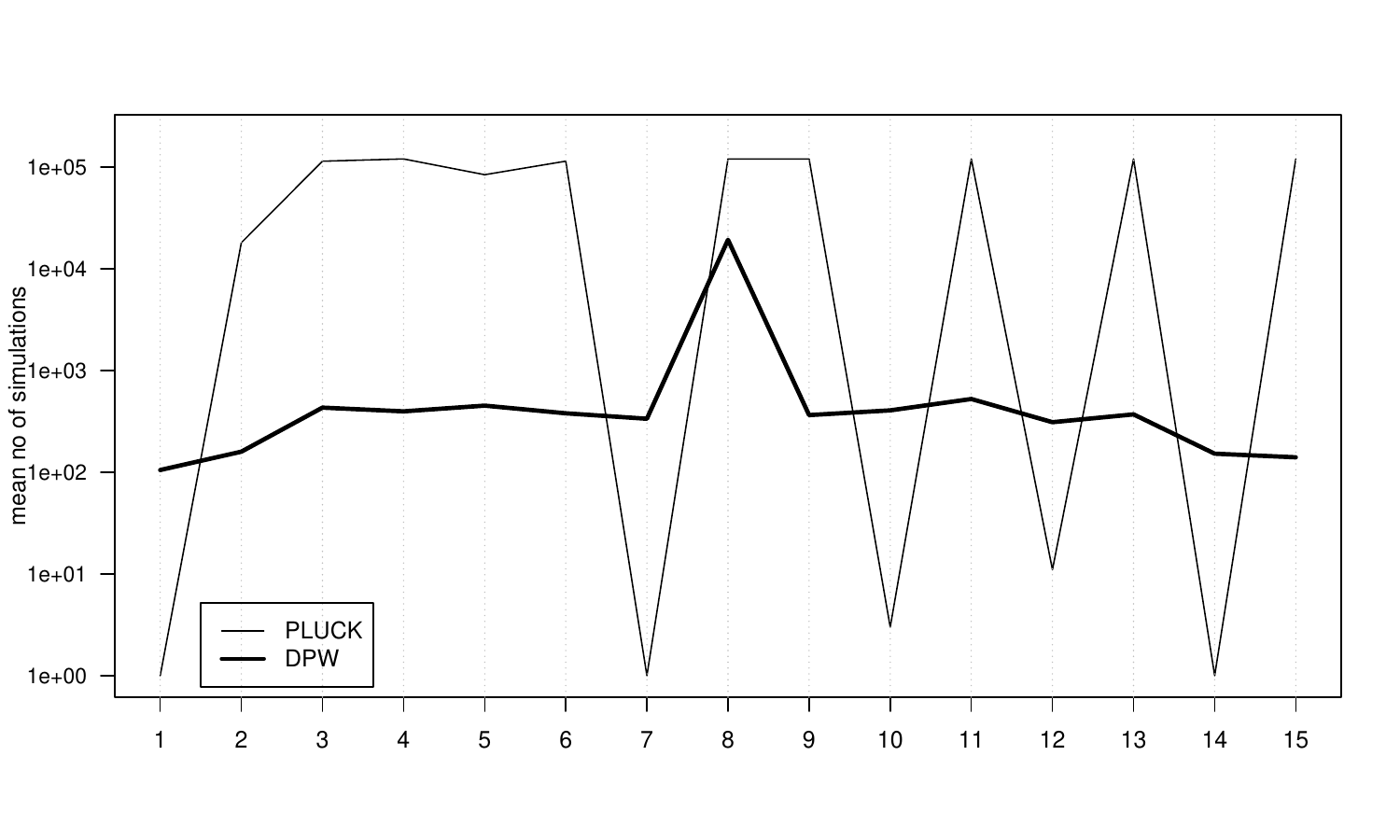}
  \caption{The results for the increasing prior mean $ \theta=(0,0.2,0.4 ,\ldots, (L-1)\cdot
    0.2)$. Shown are the mean number of simulations for the $ 15$
    random covariance matrices $ \SIGMA$ drawn from an inverse Wishart
    distribution.  Note that the $ y$-axis uses a logarithmic scale.  }
  \label{fig:PLUCKinc}
\end{figure}
\begin{figure}[htb]
  \centering
 \includegraphics[width=12cm,trim=0 40 40 55]{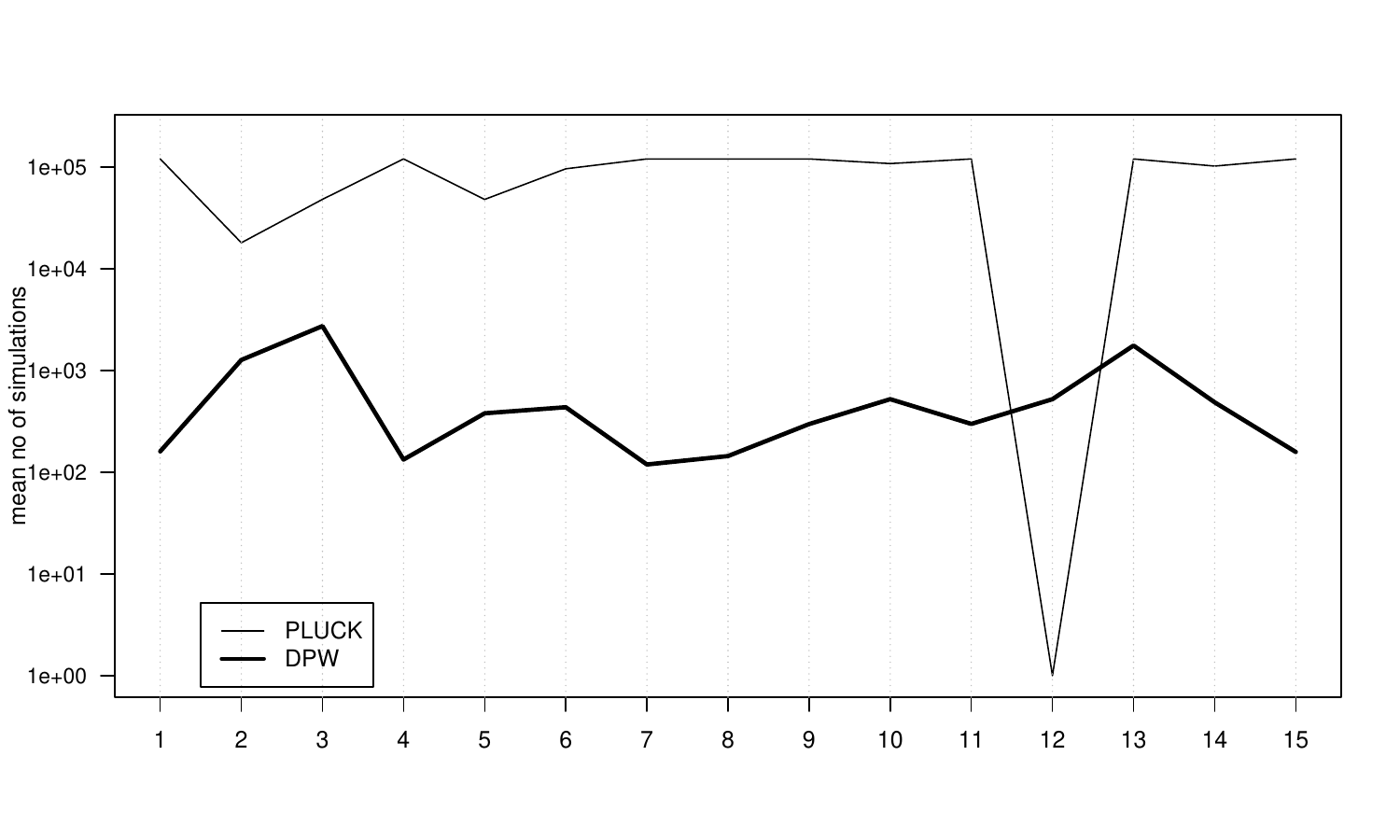}
 \caption{The results for the unfavorable prior mean $ \theta=(0 ,\ldots,
   0,0.2)$. Shown are the logarithms of the mean number of simulations for the $ 15$ random
   covariance matrices $ \SIGMA$.  }
  \label{fig:PLUCKlfc}
\end{figure}

\PLUCK needs the parameters $ \theta,\Gamma,\gamma$ and $ q$.  To make \PLUCK
comparable to our set-up, we chose as prior parameter $ \theta$ the vector
determined by our $ \MU$-case, i.e., as unfavorable, increasing or random, see
Subsection \ref{subsec:TestSetup}.  $ \Gamma$ was drawn randomly from a Wishart
distribution (with random scale matrix). We set $ \gamma=L+1$ and $ q=1$ and then
simulated the covariance matrix $ \SIGMA$ with an inverse Wishart distribution
with parameters $ \gamma$ and $ \Gamma$.  The mean $\MU=(\mu_1 ,\ldots,
\mu_L)$ is drawn from a Normal distribution $ \NN_L(\theta,\SIGMA)$ (as $ q=1$),
so that the pair $ (\MU,\SIGMA)$ is drawn from a Normal-InverseWishart
distribution as required. In this way, $ M_{cov}=15$ pairs $( \MU,\SIGMA)$ are
produced for each of the $ \MU$-cases ``ufc'' and ``inc''. For the case
``unif'', we repeated these steps for the $ M_{\mu}=15$ randomly drawn values of
$ \theta$. Then, simulations were repeated $ M=20$ times as before.  

Given a pair $ (\MU,\SIGMA)$, the observations $ \X$ are drawn from a Normal
distribution $ \NN_L(\MU,\SIGMA)$ as before. For \PLUCK, the single samples for
each iteration are independent of each other, whereas for \BayesRS we used the
sampling scheme as described in \ref{subsec:IterativeAllocation}.

We continued the iterations in \PLUCK until the opportunity cost was smaller
then the indifference parameter $ \delta=0.05$.  For our procedure \BayesRS with
allocation strategy \DPW we assumed an uninformative prior as before and ran it
until PCS $ \ge 1-\alpha$. This means that \PLUCK was allowed to know when it
hit the true value whereas \BayesRS had to ensure  PCS $ \ge 1-\alpha$.

It turned out, that these test cases were too simple, \PLUCK selected the true
value after one or two steps in most cases, though in some cases it could not
find a solution within our limits of $120\,000$ observations. \BayesRS had to
make the minimal number of $ n_0=20$ complete observations before it could
start. Still on the average, \BayesRS was much better due to the few outliers of
PLUCK with $ 120\,000$ observations.

To obtain a more reliable comparison, we restricted ourselves to the prior distributions
that were formed after the $ \MU$-cases ''inc'' and ''ufc''. To make these more
difficult, we changed the priors from $ (0,1,2 ,\ldots, L-1)$ and $ (0
,\ldots, 0,1)$ to $ (0,0.2,0.4 ,\ldots, (L-1)\cdot 0.2)$   and $ (0 ,\ldots,
0,0.2)$, i.e.\ we decreased the step-size from $ 1$ to $ 0.2$ to make the true
means more difficult to distinguish. At the same time,
we extended \BayesRS by computing the exact posterior distributions during the
first $ n_0$ complete observations (using \cite{DeGroot2004d},10.3) and stopped
when   PCS $ \ge 1-\alpha$. This is a useful extension for very
simple cases. 

Figures \ref{fig:PLUCKinc} and \ref{fig:PLUCKlfc} show the results for these two
cases. Here the averages over $ M=20$ repetitions for each of the randomly drawn
$ M_{cov}=15$ covariance matrices $ \SIGMA$ is given. As can be seen, in some
cases \PLUCK found the true solution in a single step, whereas in most others,
it could not find it within the limit of $ 120\,000$ observations.
Consequently, the empirical PCS was $ 0$ for these cases, whereas our \BayesRS
had PCS $ \ge 0.95\%$ and even PCS $ =1$ in almost all cases.  Probably, \PLUCK
is more suited for a larger number of alternatives, where the opportunity cost
has more meaning than in our case.


\section{Conclusion and future work}\label{sec:Conclusion}

In this paper we presented a new sequential Bayesian R\&S procedure with support
for common random numbers. Based on an approximation of the posterior
distribution of the unknown mean and covariance, the simulation effort could be
allocated to alternatives for which insufficient data were available for a
pairwise comparison. 

Extensive experiments showed the practicability of this
approach. In particular, it proved superior to the strategy \PLUCK that also
tries to evaluate the posterior distributions but uses only one trial in each
iteration and might get stuck in case the (known) prior parameters are misleading.

In our future work we will extend this concept to the selection of multivariate
parameters as they occur in multi-criteria optimization problems. Essential parts
of this problem were solved in \citet{Gorder2012}.

\appendix
\section{Derivation of the Posterior Distribution}

From \cite{schafer1997analysis}, 5.2.4, (or many other texts on Normal
distributions) we see, that if $ (X_{1,k} ,\ldots, X_{L,k})$ is $
\NN_L(\MU,\SIGMA)$-distributed, then the conditional distribution of $ X_{ik}$
given $ (X_{1,k} ,\ldots, X_{i-1,k})=(x_{1,k} ,\ldots,x_{i-1,k})=:\x_{[<i,k]}$
is a one-dimensional Normal distribution with mean
\begin{align}
 \tilde{\mu}_{(i-1)}(\x_{[<i,k]}) &:= \mu_i
 +\beta_i(\x_{[<i,k]}-\MU_{[<i]})\quad \text{  and variance } \label{eq:DefMuTilde}\\
 \tilde{\sigma}_{(i-1)}&:= \sigma_{ii} - \beta_i \SIGMA_{[<i]} \beta_i^T\notag
\end{align}
where $ \beta_i$ was defined in \eqref{eq:DefBeta}. This allows to rewrite the
density of $\NN_L(\MU,\SIGMA) $ as a product of one-dimensional Normal densities
with parameters $\big(\tilde{\mu}_{(i-1)}(\cdot)$, $\tilde{\sigma}_{(i-1)}\big),
i=1 ,\ldots, l$, where for $ i=1$ we put $
\tilde{\mu}_{(0)}=\MU_1,\tilde{\sigma}_{(0)}=\sigma_{11}$.

Using this re-parameterization, the likelihood function $l(\MU,\x) $ of $ \MU$ for
a possibly incomplete observation $ \X=\x$ with $ n_1\ge n_2\ge \cdots \ge n_L$
as in Theorem \ref{the:posterior} is a product of one-dimensional Normal
densities which can be rearranged to (see e.g.\ \cite{schafer1997analysis}, 6.5)
\begin{align}
&l(\MU,\x) \propto \\
&\phi_1(\mu_1;\ {\xx}_{1} ,\sigma_{11}/n_1) \cdot \prod_{i=2}^L \phi_1(\mu_i;\ {\xx}_i +\beta_i(  \MU_{[<i]}-{\xx}_{[<i]}^{(n_i)}), \tilde{\sigma}_{(i-1)}/n_i)\notag.
\end{align}
If we assume $ \SIGMA$ to be known and use the uninformative prior $
\pi(\MU)\equiv 1$ for $ W$, then this likelihood is also the posterior density of $
W$ given $ \X=\x$. It is a density of a $ L$-dimensional Normal distribution $
\NN_L(\NU,\LAMBDA)$ and its factors are the conditional densities of $ W_i$
given $ W_1 ,\ldots, W_{i-1},\X$. Using standard properties of the conditional
expectation and conditional covariances, the mean $ \NU$ is obtained as
\begin{align*}
   \nu_1&= \E [W_1 \mid \X=\x] = {\xx}_{1} \label{eq:Nu1}\\
 \nu_i&= \E [W_i \mid \X=\x]\ = {\xx}_i +\beta_i( \NU_{[<i]} - {\xx}_{[<i]}^{(n_i)})\qquad \text{ for } i=2 ,\ldots, L.\end{align*}
which proves \eqref{eq:DefNuTheorem}. Similarly, we obtain $ \LAMBDA=(\Lambda_{ij})_{1\le i,j\le L}$ from
\begin{align*}
\Lambda_{11}&=\V[W_1\mid \X=\x]= \frac{\sigma_{11}}{n_1}\\
\Lambda_{ii}&=\V[W_i\mid \X=\x]\\
&= \E\Big[ \V[ W_i|\X, W_{[<i]}] \;\Big|\, \X=\x \Big]+\V \Big[\E[W_i\mid \X, W_{[<i]}]\;\Big|\, \X=\x\Big]\\
&= \frac{1}{n_i}(\sigma_{ii}-\beta_i\SIGMA_{[<i]}\beta_i^T) + \beta_i \
\LAMBDA_{[<i]}\ \beta_i^T,\quad \text{  and for $ 1\le k<i$  } \\
 \Lambda_{ki}&=\cov\Big[W_k, W_i \ \big|\  \X=\x\Big] \\
  &= \beta_i \Big(\cov [W_k, W_l\ \big|\ \X_{[
<i,\Pkt]}=\x_{[<i,\Pkt]}]\Big)^T_{l=1 ,\ldots, i-1} =   \LAMBDA_{[k,<i]}\beta_i^T.\notag
\end{align*}
which proves \eqref{eq:DefLambdaTheorem}.


\end{document}